\documentclass[11pt,reqno]{amsart}
\usepackage{amsfonts, amsmath, amssymb, color}
\usepackage{amsthm}
\usepackage{booktabs}
\usepackage{enumitem}
\usepackage{float}
\usepackage{hhline}
\usepackage{soul, color, xcolor}
\usepackage{lipsum}
\usepackage{lscape}
\usepackage{subfig}
\usepackage{textcomp}
\usepackage{tikz}
\usetikzlibrary{decorations.pathmorphing,shapes,arrows,positioning}
\usepackage{xcolor}
\usepackage{young}
\usepackage{youngtab}
\usepackage{ytableau}
\usepackage{rotating}
\usepackage{graphicx}
\usepackage[colorlinks=true, linkcolor=blue, citecolor=blue, urlcolor=blue]{hyperref} %backref=page]{hyperref}

\hypersetup{colorlinks,linkcolor=blue,urlcolor=blue,citecolor=blue}

\usepackage{scalefnt}
\usetikzlibrary{calc}
\usepackage{a4wide}

\usepackage[numbers,sort&compress]{natbib}

\allowdisplaybreaks[4]
\theoremstyle{plain}
\newtheorem{thm}{Theorem}[section]
\newtheorem{lem}[thm]{Lemma}

\newtheorem{cor}[thm]{Corollary}
\newtheorem{rem}[thm]{Remark}

\theoremstyle{definition}

\newtheorem{exmp}[thm]{Example}

\numberwithin{equation}{section} \errorcontextlines=0

\makeatletter
\newcommand{\Rmnum}[1]{\expandafter\@slowromancap\romannumeral #1@}

\begin{document}
\title{Unified monogamy and polygamy relations for multipartite systems}
\author{Yue Cao}
\address{School of Arts and Sciences, Guangzhou Maritime University, Guangzhou, Guangdong 510725, China}
\email{caoyue@gzmtu.edu.cn}
\author{Naihuan Jing*}
\address{Department of Mathematics, North Carolina State University, Raleigh, NC 27695, USA}
\email{jing@ncsu.edu}
\author{Kailash Misra}
\address{Department of Mathematics, North Carolina State University, Raleigh, NC 27695, USA}
\email{misra@ncsu.edu}
\author{Yiling Wang}
\address{Department of Mathematics, North Carolina State University, Raleigh, NC 27695, USA}
\email{ywang327@ncsu.edu}
\subjclass[2010]{Primary: 81P68; Secondary: 81P40, 85}\keywords{Monogamy, Polygamy, Concurrence, Concurrence of assistance (CoA)}%Convex-roof extended negativity of assistance (CRENoA)
\thanks{Supported in part by Simons Foundation grant nos. 36482 and MP-TSM-00002518}
\thanks{$*$Corresponding author: jing@ncsu.edu}

\begin{abstract}

For a bipartite entanglement measure $\mathcal{E}$ that satisfies the $\gamma$th-power monogamy inequality (Eq.~\eqref{e:chap1-ineq1}), and for its assisted counterpart $\mathcal{E}_a$ that obeys the $\delta$th-power polygamy inequality (Eq.~\eqref{e:chap1-ineq2}), we introduce a unified, tunable framework indexed by a parameter $m\geq1$. Within this framework, we derive two hierarchical families of refined inequalities:
\begin{enumerate}[label={\bf(\roman*)}]
\item a tightened $\alpha$-power monogamy relation for $\mathcal{E}$, valid for all $\alpha \geq m\gamma$;
\item a tightened $\beta$-power polygamy relation for $\mathcal{E}_a$, applicable for $(m-1)\delta < \beta \leq m\delta$.
\end{enumerate}

As $m$ increases, the bounds become progressively tighter, recovering known results at $m=1$. Notably, the optimal monogamy bound emerges as a piecewise function of $\alpha$, with additional correction terms activated as $\alpha$ crosses successive integer thresholds, thereby offering a sharper characterization of entanglement distribution.

We demonstrate that our results generalize and strengthen existing monogamy and polygamy relations through analytical comparisons and numerical evaluations using concurrence and concurrence of assistance. This hierarchical, parameterized approach offers enhanced and flexible tools for applications in quantum communication, quantum networks, and multipartite quantum information processing.
\end{abstract}

\maketitle

\section{\textbf{Introduction}}
%Quantum entanglement is a cornerstone of quantum mechanics and a vital resource in quantum information processing, enabling protocols such as quantum teleportation, dense coding, and quantum key distribution [1¨C3]. A key property distinguishing quantum from classical correlations is the monogamy of entanglement (MoE), which restricts the shareability of entanglement among multiple subsystems [4]. This constraint plays a crucial role in securing quantum communication and understanding the structure of multipartite quantum states [5¨C7].

Quantum entanglement lies at the heart of quantum mechanics and plays a foundational role in quantum information theory and quantum communication protocols~\cite{CAF, HHHH, DFSC}. A key distinguishing feature of entanglement--unlike classical correlations--is its limited shareability among subsystems, a phenomenon known as the \emph{monogamy of entanglement} (MoE)~\cite{ASI}. This property imposes a fundamental constraint on how entanglement may be distributed across multiple parties and has critical implications for quantum key distribution, quantum networks, and various multipartite quantum technologies~\cite{B, CB, GR}.

In contrast, the dual notion of assisted entanglement, arising from the resource-theoretic perspective, leads to the concept of polygamy of entanglement. This captures the scenario in which entanglement can be spread among several subsystems with the help of an external assisting party~\cite{KDS1, K2, GG}, and has proved useful in understanding the structural dualities of quantum correlations in multipartite settings.

Let $\mathcal{E}$ be a bipartite entanglement measure. For an $N$-qubit state $\rho_{AB_1 \cdots B_{N-1}}$, the $\gamma$-power monogamy inequality is given by~\cite[Thm.~1, Def.~1]{GG}:
\begin{equation}
\mathcal{E}^{\gamma}(\rho_{A|B_1 \cdots B_{N-1}}) \geq \sum_{i=1}^{N-1} \mathcal{E}^{\gamma}(\rho_{AB_i}),
\label{e:chap1-ineq1}
\end{equation}
where $\gamma > 0$ is the smallest exponent for which the inequality holds. The reduced density matrix $\rho_{AB_i}$ is obtained by tracing out all other subsystems except $A$ and $B_i$. The monogamy inequality for squared concurrence was first introduced by Coffman, Kundu, and Wootters~\cite{CKW}, and later extended to general multipartite systems by Osborne and Verstraete~\cite{OV1}. Related extensions to other correlation measures, such as quantum discord and hybrid quantum-classical quantities, have also been developed~\cite{G1, JHC, JLLF, ZF, KPSS, KDS2, OF, OV2}.

Dually, the polygamy inequality for the $\delta$-power of an assisted entanglement measure $\mathcal{E}_a$ takes the form~\cite[Thm.~1, Def.~1]{G}:
\begin{equation}
\mathcal{E}_{a}^{\delta}(\rho_{A|B_1 \cdots B_{N-1}}) \leq \sum_{i=1}^{N-1} \mathcal{E}_{a}^{\delta}(\rho_{AB_i}),
\label{e:chap1-ineq2}
\end{equation}
where $\delta > 0$ denotes the maximal exponent ensuring the inequality's validity. This form of inequality was first demonstrated using the squared concurrence of assistance by Gour et al.~\cite{GMS}, and later extended to general multipartite systems~\cite{GBS}.

%\textcolor{red}
{While existing monogamy and polygamy inequalities provide foundational constraints, their bounds are often loose and fail to capture the hierarchical and asymmetric nature of entanglement distribution in many-body systems and quantum networks. To overcome these limitations, we propose a unified parameterized framework that directly incorporates hierarchy and asymmetry into the formulation of entanglement constraints. Specifically we introduce:
 \begin{itemize}
    \item \textbf{A hierarchy parameter $m$,} an integer that controls the depth of correlation layers considered. By increasing $m$, we activate higher-order correction terms, constructing a tower of progressively tighter inequalities that mirror the layered sharing of entanglement in complex systems.
    \item \textbf{State-dependent asymmetry parameters $\mu$ and $\nu$,} which quantify the observed imbalance in bipartite entanglement across different partitions. These parameters allow our bounds to adapt to the specific geometry of the quantum state or network.
    \end{itemize}
Within this framework, we establish a hierarchy of $\alpha$-power monogamy inequalities valid in the regime $\alpha \geq m\gamma$, as well as a family of refined $\beta$-power polygamy inequalities applicable for $(m-1)\delta < \beta \leq m\delta$. These refined bounds become progressively tighter as $m$ increases and recover the standard inequalities when $m = 1$. This approach not only yields mathematically sharper inequalities but also offers a more physically nuanced description of entanglement sharing, with significant potential value for quantum key distribution, network routing, and multipartite protocol design.}

In particular, the optimal monogamy bound obtained through our framework takes the form of a piecewise-defined function in $\alpha$, where each segment incorporates additional correction terms activated when $\alpha$ crosses successive thresholds $2m$ (with $m \in \mathbb{Z}_+$). This structure enables strictly sharper inequalities than previous continuous formulations, while retaining analytical tractability. Our results not only generalize but also significantly strengthen the best-known monogamy and polygamy relations in the literature, including those in~\cite{CJMW1, GYG}. Both analytical derivations and numerical evaluations under concrete entanglement measures, such as concurrence and concurrence of assistance (CoA), demonstrate the effectiveness and generality of our approach.

This paper is organized as follows. Section~\ref{s:prelim} introduces the necessary preliminaries and mathematical tools. In Section~\ref{s:monogamy}, we present the main results on monogamy inequalities and analyze their performance. Section~\ref{s:polygamy} discusses the corresponding polygamy relations. Technical proofs and supplementary material are provided in Section~\ref{s:appendix}.

\section{\textbf{Preliminaries}}\label{s:prelim}

We investigate several inequalities concerning the binomial function $(1 + t)^x$ over certain intervals. These results are foundational to our analysis of $\alpha$th-monogamy and $\beta$th-polygamy relations in multipartite quantum systems. Our aim is to compare and refine earlier inequalities that have been widely used in the study of such entanglement relations.

For any real number $x$, we define the binomial coefficient as
\[
\binom{x}{n} = \frac{x(x-1)\cdots(x-n+1)}{n!}, \quad \text{with} \quad \binom{x}{0} = 1.
\]

\begin{lem}\label{l:chap2-lem1}
Let $m \geq 1$ be a parameter and $k > 0$ a fixed constant. Then for all $t \geq k$, the following holds:

\begin{itemize}
    \item[\textbf{(1)}] If $x \geq m$, then
    \begin{equation}\label{e:chap2-ineq1}
    (1 + t)^x \geq t^x + \sum_{n=0}^{\lfloor m-1 \rfloor} \binom{x}{n} \left[ (1 + k)^{x - n} - k^{x - n} \right] (t - k)^n.
    \end{equation}

    \item[\textbf{(2)}] If $m - 1 < x \leq m$ $(m\in\mathbb{Z}_{+})$, we have
    \begin{equation}\label{e:chap2-ineq2}
    \begin{aligned}
    (1 + t)^x &\leq t^x + \sum_{n=0}^{ m - 1 } \binom{x}{n} \left[ (1 + k)^{x - n} - k^{x - n} \right] (t - k)^n.
    \end{aligned}
    \end{equation}
\end{itemize}
Here, for any real number $a$, $\lfloor a \rfloor$ denote the floor function.
\end{lem}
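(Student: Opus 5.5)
Set $f(t) = (1+t)^x - t^x$. The key observation is that the sum on the right-hand side of both inequalities is the Taylor polynomial of $f$ at $t=k$:
\[
f^{(n)}(t) = x(x-1)\cdots(x-n+1)\left[(1+t)^{x-n} - t^{x-n}\right],
\quad\text{so}\quad
\frac{f^{(n)}(k)}{n!} = \binom{x}{n}\left[(1+k)^{x-n}-k^{x-n}\right].
\]
Write $p = \lfloor m-1\rfloor$ in part (1) and $p = m-1$ in part (2). Both inequalities then read $f(t) \gtrless T_p(t)$, where $T_p$ is the degree-$p$ Taylor polynomial of $f$ at $k$. The plan is to apply Taylor's theorem with Lagrange remainder on $[k,t]$:
\[
f(t) - T_p(t) = \frac{f^{(p+1)}(\xi)}{(p+1)!}\,(t-k)^{p+1}, \qquad \xi\in(k,t).
\]
For $t=k$ both sides agree, so we may assume $t>k$. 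Since $(t-k)^{p+1}>0$, everything reduces to determining the sign of
\[
f^{(p+1)}(\xi) = x(x-1)\cdots(x-p)\left[(1+\xi)^{x-p-1}-\xi^{x-p-1}\right]
\]
for $\xi>0$.

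\textbf{Part (1).} Here $x\ge m$ and $p=\lfloor m-1\rfloor \le m-1 \le x-1$, so every factor $x-j$ with $0\le j\le p$ is nonnegative. For the bracket, $x-p-1 \ge m-1-\lfloor m-1\rfloor \ge 0$. Since $s\mapsto s^{x-p-1}$ is nondecreasing on $(0,\infty)$, the bracket is $\ge 0$. Hence the remainder is nonnegative, which gives inequality \eqref{e:chap2-ineq1}.

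\textbf{Part (2).} Here $m\in\mathbb{Z}_+$, $p=m-1$ and $m-1<x\le m$. The factors satisfy $x-j>0$ for $j\le m-1$, because $x>m-1$, so the product is positive. The exponent is $x-m\in(-1,0]$. When $x-m<0$, the map $s\mapsto s^{x-m}$ is strictly decreasing, so $(1+\xi)^{x-m}<\xi^{x-m}$ and the bracket is negative. When $x=m$, the bracket is $0$. In either case the remainder is $\le 0$, which gives inequality \eqref{e:chap2-ineq2}.

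The derivatives involve only $t^{x-n}$ with $t\ge k>0$, so smoothness on $[k,\infty)$ is not an issue. The only point needing care is the bookkeeping for non-integer $m$ in part (1): one must check that $x-p-1\ge0$, which holds because $p+1=\lfloor m\rfloor\le m\le x$. I do not anticipate a genuine obstacle beyond this; the main step is recognizing the sum as a Taylor expansion, after which the argument is the sign check above.
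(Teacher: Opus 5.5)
Your proof is correct and follows the paper's argument. The paper likewise treats the sum as the degree-$\lfloor m-1\rfloor$ (resp.\ $m-1$) Taylor polynomial of $f(x,t)=(1+t)^x-t^x$ in $t$ at $k$, and signs the Lagrange remainder $\binom{x}{\lfloor m\rfloor}\bigl[(1+\xi)^{x-\lfloor m\rfloor}-\xi^{x-\lfloor m\rfloor}\bigr](t-k)^{\lfloor m\rfloor}$; your sign check is slightly cleaner, since for integer $m$ the paper's extra floor/ceiling case split in part (2) is unnecessary.
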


From~\cite[Lem. 2.1 (2.4)]{CJMW1}, we have the following inequality:
\begin{equation}\label{e:chap2-ineq3}
(1 + t)^x \geq t^x + (1 + k)^x - k^x + k^{-x} - t^{-x} + (2^{m - 1} - 2)x(t - k),
\end{equation}
for $x \geq m \geq 2$, where $m := 1 + \log_2(r + 2)$ and $r \geq 0$ as defined in~\cite[Lem. 2.1]{CJMW1}.

\bigskip

This leads to the following corollary:

\begin{cor}\label{c:chap2-cor1}
If $t \geq k \geq 1$ and $x \geq m \geq 2$, then
\begin{equation}\label{e:chap2-ineq4}
\sum_{n=1}^{\lfloor m - 1 \rfloor} \binom{x}{n} \left[ (1 + k)^{x - n} - k^{x - n} \right] (t - k)^n \geq k^{-x} - t^{-x} + (2^{m - 1} - 2)x(t - k).
\end{equation}
\end{cor}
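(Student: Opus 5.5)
The plan is to compare the two sides term by term in powers of $(t-k)$. The right-hand side of \eqref{e:chap2-ineq4} has a nonlinear piece $k^{-x}-t^{-x}$ and a linear piece, so I will first reduce the right-hand side to something linear in $(t-k)$. Then I will show that the $n=1$ term on the left-hand side alone already dominates it. The remaining terms $n\ge 2$ only need to be shown nonnegative.

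\emph{Step 1 (discard the higher terms).} For $2\le n\le \lfloor m-1\rfloor$ we have $n\le m-1\le x-1$, so every factor $x-j$ ($0\le j\le n-1$) in $\binom{x}{n}$ is positive, and hence $\binom{x}{n}>0$. Since $x-n>0$ and $1+k>k$, we get $(1+k)^{x-n}-k^{x-n}>0$. Finally $(t-k)^n\ge 0$. So all terms with $n\ge2$ are nonnegative, and it suffices to prove
\[
x\bigl[(1+k)^{x-1}-k^{x-1}\bigr](t-k)\ \ge\ k^{-x}-t^{-x}+(2^{m-1}-2)x(t-k).
\]

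\emph{Step 2 (linearize $k^{-x}-t^{-x}$).} The function $s\mapsto s^{-x}$ is convex and decreasing on $(0,\infty)$. Hence, by the mean value theorem, $k^{-x}-t^{-x}\le x k^{-x-1}(t-k)$ for $t\ge k$. After dividing by $x(t-k)$ (the case $t=k$ is trivial), it remains to show
\[
(1+k)^{x-1}-k^{x-1}\ \ge\ k^{-x-1}+2^{m-1}-2 .
\]

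\emph{Step 3 (the key numerical estimate).} Set $g(k)=(1+k)^{x-1}-k^{x-1}$. Then $g'(k)=(x-1)\bigl[(1+k)^{x-2}-k^{x-2}\bigr]\ge0$ because $x\ge2$, so $g$ is nondecreasing. For $k\ge1$ this gives $g(k)\ge g(1)=2^{x-1}-1\ge 2^{m-1}-1$, using $x\ge m$. Since $k\ge1$ also gives $k^{-x-1}\le 1$, we conclude
\[
g(k)\ \ge\ 2^{m-1}-1\ \ge\ 2^{m-1}-2+k^{-x-1},
\]
which finishes the argument.

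The only delicate point is Step 3: the crude bound must absorb both the constant $2^{m-1}-2$ and the term $k^{-x-1}$ coming from the linearization. This is exactly where the hypotheses $k\ge1$ and $x\ge m\ge2$ are used: they give the monotonicity of $g$ and the bound $k^{-x-1}\le1$.
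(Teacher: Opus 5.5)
Your proof is correct and is essentially the paper's argument. The paper sets $F(t)$ equal to the difference of the two sides, uses $F''(t)\ge 0$ to get $F'(t)\ge F'(k)$, and then shows $F'(k)\ge 0$ via $k^{-x-1}\le 1$ and the monotonicity of $(1+k)^{x-1}-k^{x-1}$; your dropping of the $n\ge 2$ terms plus the tangent-line bound on $t^{-x}$ is that same convexity step done term by term, and it reduces to the identical inequality $F'(k)\ge 0$, closed by the same estimates.
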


\bigskip

The lower bound in inequality~\eqref{e:chap2-ineq1} provides a uniformly tighter (or at least equivalent) estimate of $(1 + t)^x$ compared to previously known results for all integer $m \geq 1$, as shown in the following cases:

\begin{enumerate}[label=\textbf{(\roman*)}]
\item When $m = 1$, the inequalities in~\cite[Lem. 1]{XZL} are special instances of Lemma~\ref{l:chap2-lem1}.

\item When $m = 2$, under the conditions $t \geq k \geq 1$ and $x \geq 2$, inequality~\eqref{e:chap2-ineq1} simplifies to
\begin{equation}\label{e:chap2-ineq5}
(1 + t)^x \geq t^x + (1 + k)^x - k^x + x \left[ (1 + k)^{x - 1} - k^{x - 1} \right](t - k).
\end{equation}
Meanwhile,~\cite[Eq. (2.2)]{CJMW1} provides
\begin{equation}\label{e:chap2-ineq6}
(1 + t)^x \geq t^x + (1 + k)^x - k^x + k^{-x} - t^{-x}.
\end{equation}
Thus, Corollary~\ref{c:chap2-cor1} demonstrates that inequality~\eqref{e:chap2-ineq5} yields a sharper lower bound than~\eqref{e:chap2-ineq6}.

\item For $m > 2$, Corollary~\ref{c:chap2-cor1} confirms that inequality~\eqref{e:chap2-ineq1} strictly improves upon the bound in~\eqref{e:chap2-ineq3}, which appears as equation (2.4) in Lemma 2.1 of~\cite{CJMW1}, under the same conditions $t \geq k \geq 1$ and $x \geq m > 2$.
\end{enumerate}

\begin{lem}\label{l:chap2-lem2}
Let $\{p_i\}_{i=1}^{N}$ be a non-increasing sequence of positive numbers, i.e., $p_i \geq p_{i+1}$ for all $1 \leq i < N$.

\begin{itemize}
    \item[\textbf{(1)}] For $x \geq m$ with $m \geq 1$, we have
    \begin{equation}\label{e:chap2-ineq7}
    \left( \sum_{i=1}^{N} p_i \right)^x \geq p_1^x + \sum_{l=2}^{N} p_l^x \left( \sum_{n=0}^{\lfloor m - 1 \rfloor} \binom{x}{n} \left( l^{x - n} - (l - 1)^{x - n} \right) (\tau_l - (l - 1))^n \right),
    \end{equation}

    \item[\textbf{(2)}] For $m - 1 < x \leq m$ with $m_{\in\mathbb{Z}_{+}}\geq1$, we have
    \begin{equation}\label{e:chap2-ineq8}
    \begin{aligned}
    \left( \sum_{i=1}^{N} p_i \right)^x &\leq p_1^x + \sum_{l=2}^{N} p_l^x \left( \sum_{n=0}^{ m - 1 } \binom{x}{n} \left( l^{x - n} - (l - 1)^{x - n} \right) (\tau_l - (l - 1))^n \right).
    \end{aligned}
    \end{equation}
\end{itemize}
Here, $\tau_l := \frac{p_1 + p_2 + \cdots + p_{l-1}}{p_l}$ for $l = 2, \dots, N$.
\end{lem}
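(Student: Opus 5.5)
Both parts will follow from Lemma~\ref{l:chap2-lem1} by induction on the number of summands $N$. The induction peels off the smallest term $p_N$ at each step. Set $S_l := p_1+\cdots+p_l$, so that $\tau_l = S_{l-1}/p_l$. The monotonicity hypothesis gives $S_{l-1} \ge (l-1)p_l$, that is, $\tau_l \ge l-1$. This is exactly what is needed to apply Lemma~\ref{l:chap2-lem1} with $t=\tau_l$ and $k=l-1$. For $l\ge 2$ we have $k = l-1 \ge 1 > 0$, so the lemma's hypothesis $k>0$ holds.

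For part (1), the base case $N=1$ is the trivial identity $p_1^x = p_1^x$. For the inductive step, assume (\ref{e:chap2-ineq7}) holds for $N-1$ terms. Factor out $p_N$:
\[
S_N^x = p_N^x\,(1+\tau_N)^x.
\]
Applying (\ref{e:chap2-ineq1}) with $t=\tau_N$ and $k=N-1$ gives
\[
(1+\tau_N)^x \ge \tau_N^x + \sum_{n=0}^{\lfloor m-1\rfloor}\binom{x}{n}\bigl[N^{x-n}-(N-1)^{x-n}\bigr](\tau_N-(N-1))^n .
\]
Multiplying by $p_N^x$ turns the first term into $p_N^x\tau_N^x = S_{N-1}^x$, and the sum becomes exactly the $l=N$ summand of (\ref{e:chap2-ineq7}). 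Then apply the induction hypothesis to $S_{N-1}^x$; the sequence $p_1,\dots,p_{N-1}$ is still non-increasing, and its $\tau_l$ for $l\le N-1$ are unchanged. This yields the claimed bound. Part (2) is verbatim the same argument with every inequality reversed, using (\ref{e:chap2-ineq2}) instead of (\ref{e:chap2-ineq1}) and the hypothesis $m-1<x\le m$ with $m$ a positive integer. Reversed inequalities chain in the same way, since all quantities involved are nonnegative and we only add a common term.

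The main thing to check is the $l=2$ step, where $k=1$. This is legitimate since Lemma~\ref{l:chap2-lem1} needs only $k>0$. I would also note that no sign issues arise in (1): the inequality is applied once per level and then multiplied by $p_N^x>0$, and the lower-level estimate is substituted into the term $S_{N-1}^x$, which appears with coefficient $+1$. The bracketed coefficients may have either sign, but this does not matter because they are never compared, only carried along.

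Everything else is bookkeeping. I expect no real obstacle beyond making sure the ordering is used correctly. The ordering is needed only to guarantee $t\ge k$, i.e.\ $\tau_l\ge l-1$, which follows from $p_i\ge p_l$ for all $i<l$.
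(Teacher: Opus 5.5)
Your proposal is correct and is essentially the paper's proof. Like the paper, you induct on $N$, write $\bigl(\sum_{i=1}^{N}p_i\bigr)^x=p_N^x(1+\tau_N)^x$, apply Lemma~\ref{l:chap2-lem1} with $t=\tau_N$ and $k=N-1$ (justified by $p_1+\cdots+p_{N-1}\ge (N-1)p_N$), and feed the resulting term $(p_1+\cdots+p_{N-1})^x$ into the inductive hypothesis, handling part (2) the same way via \eqref{e:chap2-ineq2}.
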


\section{\textbf{Monogamy Relations for the $\alpha$-th Power of Entanglement Measures}}\label{s:monogamy}

Let $\rho = \rho_{AB_{1}\cdots B_{N-1}}$ be an $N$-partite quantum state defined on the Hilbert space $\mathcal{H}_A \otimes \mathcal{H}_{B_1} \otimes \cdots \otimes \mathcal{H}_{B_{N-1}}$. When the context is clear, we denote $\mathcal{E}_{(a)}(\rho_{AB_i}) := \mathcal{E}_{(a)AB_i}$ and $\mathcal{E}_{(a)}(\rho_{A|B_1\cdots B_{N-1}}) := \mathcal{E}_{(a)A|B_1\cdots B_{N-1}}$ for brevity.

\bigskip

A new class of monogamy relations for multipartite quantum systems follows directly from inequality~\eqref{e:chap2-ineq7}.

\begin{thm}\label{t:chap3-thm1}
Let $\mathcal{E}$ be a bipartite entanglement measure satisfying the $\gamma$-th power monogamy relation~\eqref{e:chap1-ineq1}, and let $\rho_{AB_1\cdots B_{N-1}}$ be any $N$-qubit quantum state. Suppose the set $\{\mathcal{E}_i := \mathcal{E}_{AB_{i'}}\}_{i=1}^{N-1}$ is arranged in descending order such that $\mathcal{E}_i^\gamma \geq \mathcal{E}_{i+1}^\gamma > 0$ for $i = 1,\dots,N-2$. Then for $\alpha \geq m\gamma$ $(m \geq 1)$, the following monogamy relation holds:
\begin{equation}\label{e:chap3-ineq1}
\mathcal{E}^{\alpha}_{A|B_1\cdots B_{N-1}} \geq \mathcal{E}_1^\alpha + \sum_{l=2}^{N-1} \left( \sum_{n=0}^{\lfloor m-1 \rfloor} \binom{\alpha/\gamma}{n} \left[l^{\frac{\alpha}{\gamma} - n} - (l-1)^{\frac{\alpha}{\gamma} - n}\right](\tau_l - (l-1))^n \right)\mathcal{E}_l^\alpha,
\end{equation}
where $\tau_l := \frac{\sum_{i=1}^{l-1} \mathcal{E}_i^\gamma}{\mathcal{E}_l^\gamma}$ for $l = 2, \dots, N-1$.
\end{thm}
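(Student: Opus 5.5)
The plan is to reduce Theorem~\ref{t:chap3-thm1} to Lemma~\ref{l:chap2-lem2}(1) by a change of variables. Set $p_i := \mathcal{E}_i^\gamma$ for $i=1,\dots,N-1$. By hypothesis these are positive and non-increasing, so Lemma~\ref{l:chap2-lem2} applies to $\{p_i\}_{i=1}^{N-1}$, with $N-1$ in place of $N$. Take the exponent $x := \alpha/\gamma$. The assumption $\alpha \geq m\gamma$ is exactly $x \geq m$, which is the hypothesis of part (1) of the lemma. With these choices the lemma's $\tau_l = (p_1+\cdots+p_{l-1})/p_l$ coincides with the $\tau_l$ defined in the theorem.

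First, I would use the $\gamma$-power monogamy relation \eqref{e:chap1-ineq1}:
\[
\mathcal{E}^\gamma_{A|B_1\cdots B_{N-1}} \geq \sum_{i=1}^{N-1}\mathcal{E}^\gamma_{AB_i} = \sum_{i=1}^{N-1} p_i .
\]
The second equality holds because relabelling the $B_i$ into descending order does not change the sum. Since $x=\alpha/\gamma>0$, the map $s\mapsto s^{x}$ is increasing on $[0,\infty)$. Raising both sides to the power $x$ therefore gives
\[
\mathcal{E}^\alpha_{A|B_1\cdots B_{N-1}} \geq \Bigl(\sum_{i=1}^{N-1} p_i\Bigr)^{\alpha/\gamma}.
\]

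Next, I would apply \eqref{e:chap2-ineq7} to the right-hand side and substitute $p_l^{x} = \mathcal{E}_l^{\alpha}$ and $p_1^{x}=\mathcal{E}_1^\alpha$. This gives \eqref{e:chap3-ineq1} term by term, with the binomial coefficients $\binom{\alpha/\gamma}{n}$ and the weights $l^{\alpha/\gamma-n}-(l-1)^{\alpha/\gamma-n}$ appearing verbatim.

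Essentially all the analytic work sits in Lemma~\ref{l:chap2-lem2}(1), which may be assumed. The only points needing care are bookkeeping ones. I must check that the positivity assumption $\mathcal{E}_{i+1}^\gamma>0$ makes every $\tau_l$ well defined. I must also check that $\tau_l \geq l-1$, since each $p_i\geq p_l$ for $i<l$; this places the lemma in its regime $t \geq k = l-1$, should the lemma's proof (via Lemma~\ref{l:chap2-lem1} with $k=l-1$) implicitly require it. I expect no genuine obstacle beyond confirming this monotonicity step and the index shift from $N$ to $N-1$.
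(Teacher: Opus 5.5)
Your proposal is correct and follows the paper's route exactly: the paper says Theorem~\ref{t:chap3-thm1} ``follows directly from inequality~\eqref{e:chap2-ineq7}.'' That amounts to combining the $\gamma$-monogamy relation, the monotonicity of $s\mapsto s^{\alpha/\gamma}$, and Lemma~\ref{l:chap2-lem2}(1) applied to $p_i=\mathcal{E}_i^\gamma$ with $x=\alpha/\gamma$. Your bookkeeping (positivity of the $p_i$, $\tau_l\ge l-1$, and the index shift $N\to N-1$) only makes explicit what the paper leaves implicit.
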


%\textcolor{red}
{\textbf{Physical Interpretation.} The parameter $m$ in \eqref{e:chap3-ineq1} selects the correlation order included in the bound. With $m=1$, we recover the basic monogamy inequality. Increasing $m$ adds higher-order correction terms that capture deeper sharing constraints among subsystems. Therefore, a larger $m$ yields a more complete description of the entanglement structure. This is particularly important for analyzing systems with strong multipartite correlations or for cryptographic applications that demand tighter bounds.}

\begin{lem}\label{l:chap3-lem1}
Let $\mathcal{E}$ be a monogamous entanglement measure satisfying~\eqref{e:chap1-ineq1}. For any tripartite quantum state $\rho_{ABC}$, there exists $\mu \geq 1$ such that
\begin{equation}\label{e:chap3-ineq2}
\mathcal{E}_{A|BC}^\gamma \geq \mu\,\mathcal{E}_{AB}^\gamma + \mathcal{E}_{AC}^\gamma \geq \mathcal{E}_{AB}^\gamma + \mathcal{E}_{AC}^\gamma,
\end{equation}
where $\mu$ depends on the specific state $\rho_{ABC}$.
\end{lem}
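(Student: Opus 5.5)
The plan is to define $\mu$ explicitly from the state and then check the two inequalities in \eqref{e:chap3-ineq2} separately. Apply the $\gamma$-th power monogamy relation \eqref{e:chap1-ineq1} with $N=3$, $B_1=B$, $B_2=C$. This gives
\[
\mathcal{E}_{A|BC}^\gamma \geq \mathcal{E}_{AB}^\gamma + \mathcal{E}_{AC}^\gamma,
\]
so the slack $\mathcal{E}_{A|BC}^\gamma - \mathcal{E}_{AC}^\gamma \geq \mathcal{E}_{AB}^\gamma \geq 0$.

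\textbf{Case $\mathcal{E}_{AB}>0$.} Set
\[
\mu := \frac{\mathcal{E}_{A|BC}^\gamma - \mathcal{E}_{AC}^\gamma}{\mathcal{E}_{AB}^\gamma}.
\]
The displayed monogamy inequality gives $\mu\geq 1$ directly. With this choice, the first inequality in \eqref{e:chap3-ineq2} holds with equality. The second inequality, $\mu\,\mathcal{E}_{AB}^\gamma \geq \mathcal{E}_{AB}^\gamma$, follows from $\mu\ge1$ and $\mathcal{E}_{AB}^\gamma\ge 0$. This $\mu$ is the largest admissible constant, and it visibly depends on $\rho_{ABC}$. Any value in $[1,\mu]$ would also work.

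\textbf{Case $\mathcal{E}_{AB}=0$.} Here the $\mu$-term vanishes, so any $\mu\ge1$ works; take $\mu=1$. Both inequalities then reduce to \eqref{e:chap1-ineq1} for $N=3$.

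There is no genuine obstacle. The only points needing care are the division by $\mathcal{E}_{AB}^\gamma$, which the case split handles, and stating explicitly that \eqref{e:chap1-ineq1} is being applied to the tripartite partition $A|BC$.
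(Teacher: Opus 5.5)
Your proof is correct and matches the paper's argument. It uses the same case split on whether $\mathcal{E}_{AB}^\gamma$ vanishes, the same choice $\mu = (\mathcal{E}_{A|BC}^\gamma - \mathcal{E}_{AC}^\gamma)/\mathcal{E}_{AB}^\gamma$ giving equality in the first inequality, and the same derivation of $\mu\ge 1$ from \eqref{e:chap1-ineq1} with $N=3$. The paper also invokes monotonicity under partial trace and assumes $\mathcal{E}_{AB}^\gamma\ge\mathcal{E}_{AC}^\gamma$; as your version shows, neither is needed for the lemma as stated.
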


\begin{proof}
Since $\mathcal{E}$ is non-increasing under partial trace, we have $\mathcal{E}_{A|BC}^\gamma \geq \max\{\mathcal{E}_{AB}^\gamma, \mathcal{E}_{AC}^\gamma\}$. Assume $\mathcal{E}_{AB}^\gamma = \max\{\mathcal{E}_{AB}^\gamma, \mathcal{E}_{AC}^\gamma\}$.
\begin{itemize}
    \item[\textbf{(1)}]  If $\mathcal{E}_{AB}^\gamma = 0$, set $\mu = 1$, and~\eqref{e:chap3-ineq2} holds trivially.
    \item[\textbf{(2)}]  If $\mathcal{E}_{AB}^\gamma > 0$, set $\mu := \frac{\mathcal{E}_{A|BC}^\gamma - \mathcal{E}_{AC}^\gamma}{\mathcal{E}_{AB}^\gamma}$, then the equality $\mathcal{E}_{A|BC}^\gamma = \mu \mathcal{E}_{AB}^\gamma + \mathcal{E}_{AC}^\gamma$ holds, and~\eqref{e:chap1-ineq1} guarantees $\mu \geq 1$.
\end{itemize}
\end{proof}

\bigskip
The following corollary follows immediately from Theorem~\ref{t:chap3-thm1} and Lemma~\ref{l:chap3-lem1}.

\begin{cor}\label{c:chap3-cor1}
Let $\mathcal{E}$ be a bipartite entanglement measure satisfying the $\gamma$-th monogamy inequality~\eqref{e:chap1-ineq1} for any tripartite quantum state $\rho_{ABC}$. If $\mathcal{E}_{AB}^\gamma \geq \mathcal{E}_{AC}^\gamma > 0$, then for $\alpha \geq m\gamma$ $(m \geq 1)$ and $\mu \geq 1$,
\begin{equation}\label{e:chap3-ineq3}
\begin{aligned}
\mathcal{E}_{A|BC}^{\alpha} \geq \mu^{\frac{\alpha}{\gamma}} \mathcal{E}_{AB}^{\alpha} + \sum_{n=0}^{\lfloor m - 1 \rfloor} \binom{\alpha/\gamma}{n} \left[(1 + \mu)^{\frac{\alpha}{\gamma} - n} - \mu^{\frac{\alpha}{\gamma} - n} \right] \mu^n \left(\frac{\mathcal{E}_{AB}^\gamma}{\mathcal{E}_{AC}^\gamma} - 1\right)^n \mathcal{E}_{AC}^{\alpha}.
\end{aligned}
\end{equation}
\end{cor}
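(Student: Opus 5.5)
Our approach is to start from Lemma~\ref{l:chap3-lem1} and reduce everything to the scalar inequality~\eqref{e:chap2-ineq1} of Lemma~\ref{l:chap2-lem1}, using the two-term setting. By Lemma~\ref{l:chap3-lem1} there is a $\mu\geq 1$ with $\mathcal{E}_{A|BC}^\gamma\geq \mu\,\mathcal{E}_{AB}^\gamma+\mathcal{E}_{AC}^\gamma$. Since $\alpha/\gamma\geq m\geq 1>0$, raising to the power $\alpha/\gamma$ preserves the inequality. Factoring out $\mathcal{E}_{AC}^{\gamma}>0$ then gives
\[
\mathcal{E}_{A|BC}^{\alpha}\geq \mathcal{E}_{AC}^{\alpha}\,(1+t)^{\alpha/\gamma},\qquad t:=\frac{\mu\,\mathcal{E}_{AB}^\gamma}{\mathcal{E}_{AC}^\gamma}.
\]

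Next we choose the constant $k$ in Lemma~\ref{l:chap2-lem1}(1). The natural choice is $k:=\mu$. The hypothesis $\mathcal{E}_{AB}^\gamma\geq\mathcal{E}_{AC}^\gamma$ gives $t\geq\mu=k>0$, so the lemma applies with $x=\alpha/\gamma\geq m$. It yields
\[
(1+t)^{x}\geq t^{x}+\sum_{n=0}^{\lfloor m-1\rfloor}\binom{x}{n}\bigl[(1+\mu)^{x-n}-\mu^{x-n}\bigr](t-\mu)^n .
\]
Now we substitute back. First, $t^{x}\mathcal{E}_{AC}^{\alpha}=\mu^{\alpha/\gamma}\mathcal{E}_{AB}^{\alpha}$. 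Second, $t-\mu=\mu\bigl(\mathcal{E}_{AB}^\gamma/\mathcal{E}_{AC}^\gamma-1\bigr)$, so $(t-\mu)^n=\mu^n\bigl(\mathcal{E}_{AB}^\gamma/\mathcal{E}_{AC}^\gamma-1\bigr)^n$. Multiplying the inequality by $\mathcal{E}_{AC}^{\alpha}$ gives exactly~\eqref{e:chap3-ineq3}.

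Alternatively, one could view this as Theorem~\ref{t:chap3-thm1} with $N-1=2$, applied to the weighted sequence $p_1=\mu\mathcal{E}_{AB}^\gamma$, $p_2=\mathcal{E}_{AC}^\gamma$. That route would produce $l=2$ coefficients $2^{x-n}-1$ rather than $(1+\mu)^{x-n}-\mu^{x-n}$. So the direct use of Lemma~\ref{l:chap2-lem1} with $k=\mu$ is the right route.

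The only delicate point is the applicability of Lemma~\ref{l:chap2-lem1}. It requires $t\geq k$, which follows from $\mathcal{E}_{AB}^\gamma\geq\mathcal{E}_{AC}^\gamma$. For the monotonicity step we also need $x\geq m$. No sign issue arises in the correction terms, because Lemma~\ref{l:chap2-lem1} already handles possibly negative binomial coefficients.
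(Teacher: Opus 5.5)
Your proof is correct and follows the paper's intended argument. The paper only says the corollary ``follows immediately from Theorem~\ref{t:chap3-thm1} and Lemma~\ref{l:chap3-lem1},'' but the coefficients $(1+\mu)^{\frac{\alpha}{\gamma}-n}-\mu^{\frac{\alpha}{\gamma}-n}$, and the choice $t=\mu\, C_{AB}^2/C_{AC}^2$ in the paper's example, show that the mechanism is exactly yours: Lemma~\ref{l:chap3-lem1}, then Lemma~\ref{l:chap2-lem1}(1) with $k=\mu$ and $t=\mu\mathcal{E}_{AB}^\gamma/\mathcal{E}_{AC}^\gamma$. Your side remark is also right: applying Theorem~\ref{t:chap3-thm1} literally to the weighted pair expands at $k=1$, which gives the weaker coefficients $2^{\frac{\alpha}{\gamma}-n}-1$, so going directly through Lemma~\ref{l:chap2-lem1} is the correct reading.
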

More generally, we have
\begin{thm}\label{t:chap3-thm2}
Let $\mathcal{E}$ be a bipartite entanglement measure satisfying the $\gamma$-th monogamy relation~\eqref{e:chap1-ineq1} for any any $N$-qubit quantum state $\rho_{AB_1\cdots B_{N-1}}$. Suppose the sequence $\{\mathcal{E}_i := \mathcal{E}_{AB_{i'}}\}_{i=1}^{N-1}$ is arranged in descending order such that $\mathcal{E}_i^\gamma \geq \mathcal{E}_{i+1}^\gamma > 0$ for $i = 1,\dots,N-2$.  Then for $\alpha \geq m\gamma$ $(m \geq 1)$, we have
\begin{equation}\label{e:chap3-ineq4}
\begin{aligned}
&\mathcal{E}^{\alpha}_{A|B_{1}\cdots B_{N-1}} \geq \mu^{\frac{\alpha}{\gamma}}\left(\mathcal{E}_{1}^{\alpha}+\sum_{l=2}^{N-2}\left[\sum_{n=0}^{\lfloor m-1\rfloor}\binom{\alpha/\gamma}{n}(l^{\frac{\alpha}{\gamma}-n}-(l-1)^{\frac{\alpha}{\gamma}-n})(\tau_{l}-(l-1))^{n}\right]\mathcal{E}^{\alpha}_{l}\right)\\
&+\left(\sum_{n=0}^{\lfloor m-1\rfloor}\binom{\alpha/\gamma}{n}\left[(1+\mu(N-2))^{\frac{\alpha}{\gamma}-n}-(\mu(N-2))^{\frac{\alpha}{\gamma}-n}\right]\mu^{n}\left[\tau_{N-1}-(N-2)\right]^{n}\right)\mathcal{E}^{\alpha}_{N-1}
\end{aligned}
\end{equation}
where $\mu \geq 1$ and $\tau_l := \frac{\sum_{i=1}^{l-1} \mathcal{E}_i^\gamma}{\mathcal{E}_l^\gamma}$ for $l = 2, \dots, N-1$.
\end{thm}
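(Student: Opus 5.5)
The plan is to reduce Theorem~\ref{t:chap3-thm2} to a single application of Lemma~\ref{l:chap2-lem1}(1), feeding in Lemma~\ref{l:chap3-lem1} and Lemma~\ref{l:chap2-lem2}(1). Throughout I write $x:=\alpha/\gamma\geq m$ and assume $N\geq 3$, so that the last term genuinely involves $N-2\geq 1$ earlier parties.

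\textbf{Step 1: a tripartite cut.} View $\rho_{AB_1\cdots B_{N-1}}$ as a tripartite state $\rho_{ABC}$ with $B:=B_1\cdots B_{N-2}$ and $C:=B_{N-1}$. Here the relabelling is chosen so that $B_{N-1}$ is the party carrying $\mathcal{E}_{N-1}$, the smallest term.

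By monotonicity under partial trace, $\mathcal{E}^\gamma_{A|B}\geq \mathcal{E}_1^\gamma\geq \mathcal{E}_{N-1}^\gamma=\mathcal{E}^\gamma_{AC}$. So $\mathcal{E}_{A|B}^\gamma$ is the maximum, which is the ordering assumed in the proof of Lemma~\ref{l:chap3-lem1}. Moreover $\mathcal{E}^\gamma_{A|B}>0$, since $\mathcal{E}^\gamma_{A|B}\geq\mathcal{E}_{N-1}^\gamma>0$.

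Lemma~\ref{l:chap3-lem1} then produces $\mu\geq1$ with
\[
\mathcal{E}^\gamma_{A|B_1\cdots B_{N-1}}\geq \mu\,\mathcal{E}^\gamma_{A|B_1\cdots B_{N-2}}+\mathcal{E}_{N-1}^\gamma .
\]
Applying \eqref{e:chap1-ineq1} to the $(N-1)$-party reduced state gives $\mathcal{E}^\gamma_{A|B_1\cdots B_{N-2}}\geq S:=\sum_{i=1}^{N-2}\mathcal{E}_i^\gamma$. Hence
\[
\mathcal{E}^\gamma_{A|B_1\cdots B_{N-1}}\geq \mu S+\mathcal{E}_{N-1}^\gamma .
\]

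\textbf{Step 2: the binomial inequality.} Raise both sides to the power $x$; this is allowed since $u\mapsto u^x$ is increasing. Factor out $\mathcal{E}^\alpha_{N-1}$ to get
\[
\mathcal{E}^\alpha_{A|B_1\cdots B_{N-1}}\geq \mathcal{E}_{N-1}^\alpha(1+t)^x,\qquad t:=\frac{\mu S}{\mathcal{E}_{N-1}^\gamma}=\mu\tau_{N-1}.
\]
Because the sequence is descending, $\tau_{N-1}\geq N-2$, so $t\geq k:=\mu(N-2)>0$. Lemma~\ref{l:chap2-lem1}(1) with this $k$ gives
\[
(1+t)^x\geq t^x+\sum_{n=0}^{\lfloor m-1\rfloor}\binom{x}{n}\Bigl[(1+\mu(N-2))^{x-n}-(\mu(N-2))^{x-n}\Bigr]\mu^n\bigl(\tau_{N-1}-(N-2)\bigr)^n ,
\]
where I have used $t-k=\mu(\tau_{N-1}-(N-2))$. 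Multiplied by $\mathcal{E}_{N-1}^\alpha$, the sum is exactly the last line of \eqref{e:chap3-ineq4}.

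\textbf{Step 3: the leading term.} The remaining term is $t^x\mathcal{E}_{N-1}^\alpha=\mu^{x}S^{x}$. Apply Lemma~\ref{l:chap2-lem2}(1) to $p_i=\mathcal{E}_i^\gamma$, $i=1,\dots,N-2$. This bounds $S^x$ below by
\[
\mathcal{E}_1^\alpha+\sum_{l=2}^{N-2}\Bigl[\sum_{n=0}^{\lfloor m-1\rfloor}\binom{x}{n}\bigl(l^{x-n}-(l-1)^{x-n}\bigr)(\tau_l-(l-1))^n\Bigr]\mathcal{E}_l^\alpha .
\]
The $\tau_l$ appearing here agree with those in the statement, since they involve only the first $l-1\leq N-3$ terms. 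Multiplying by $\mu^x$ and adding the bound from Step 2 yields \eqref{e:chap3-ineq4}.

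\textbf{Main obstacle.} The step needing most care is Step 1. Lemma~\ref{l:chap3-lem1} is formulated for $\rho_{ABC}$, and here it must be applied with a composite party $B=B_1\cdots B_{N-2}$. This requires \eqref{e:chap1-ineq1} to hold for that coarse-grained tripartition, which is the standard reading of the hypothesis for arbitrary tripartite states, together with the ordering $\mathcal{E}_{A|B}\geq\mathcal{E}_{AC}$ checked above. I also have to make sure that the $\mu$ produced there is a single constant, which is why it appears uniformly as $\mu^{x}$ in front of the bracket. Finally, $\mathcal{E}_{N-1}>0$ is needed both for $\tau_{N-1}$ to be well defined and for the factorisation in Step 2. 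Everything else is the direct substitution described above.
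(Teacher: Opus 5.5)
Your Steps 2 and 3 are the details the paper compresses into ``follows from Theorem~\ref{t:chap3-thm1} and Lemma~\ref{l:chap3-lem1}'', and they are correct. Once you have $\mathcal{E}^\gamma_{A|B_1\cdots B_{N-1}}\geq \mu S+\mathcal{E}^\gamma_{N-1}$ with $S=\sum_{i=1}^{N-2}\mathcal{E}_i^\gamma$, Lemma~\ref{l:chap2-lem1}(1) with $t=\mu\tau_{N-1}$ and $k=\mu(N-2)$ gives the last line of \eqref{e:chap3-ineq4}. Lemma~\ref{l:chap2-lem2}(1) then bounds $S^{\alpha/\gamma}$ in the leading term $\mu^{\alpha/\gamma}S^{\alpha/\gamma}$.

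Step 1 is where you diverge from the paper, and you should drop your detour there.

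\textbf{The paper's route.} Apply \eqref{e:chap1-ineq1} directly to the full $N$-qubit state. This gives $\mathcal{E}^\gamma_{A|B_1\cdots B_{N-1}}-\mathcal{E}^\gamma_{N-1}\geq S$. So $\mu:=(\mathcal{E}^\gamma_{A|B_1\cdots B_{N-1}}-\mathcal{E}^\gamma_{N-1})/S\geq 1$ works, with equality in the key inequality.

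\textbf{Your route and its cost.} You apply Lemma~\ref{l:chap3-lem1} to the cut $A|B_1\cdots B_{N-2}|B_{N-1}$. This needs a monogamy inequality for a tripartition in which one party is multi-qubit. The hypothesis only provides \eqref{e:chap1-ineq1} for $N$-qubit states with single-qubit $B_i$. So your ``main obstacle'' is a genuine extra assumption, though an avoidable one; with the direct definition of $\mu$ it disappears.

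\textbf{Tightness.} The direct choice is also sharper:
\begin{itemize}
\item Any $\mu$ your route yields is at most $(\mathcal{E}^\gamma_{A|B_1\cdots B_{N-1}}-\mathcal{E}^\gamma_{N-1})/\mathcal{E}^\gamma_{A|B_1\cdots B_{N-2}}$. Since $\mathcal{E}^\gamma_{A|B_1\cdots B_{N-2}}\geq S$, this is no larger than the paper's $\mu$.
\item The right-hand side of \eqref{e:chap3-ineq4} is increasing in $\mu$. All binomial coefficients and powers of $\tau_{N-1}-(N-2)$ are nonnegative, and every exponent satisfies $\alpha/\gamma-n\geq 1$.
\end{itemize}
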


\begin{proof}
By the $\gamma$-monogamy relation~\eqref{e:chap1-ineq1}, we have
$\mathcal{E}^{\gamma}_{A|B_{1}\cdots B_{N-1}}\geq \mathcal{E}^{\gamma}_{1}+\mathcal{E}^{\gamma}_{2}+\cdots+\mathcal{E}^{\gamma}_{N-1}
$.
Since $\mathcal{E}_i^\gamma \geq \mathcal{E}_{i+1}^\gamma$ for $i=1,\dots,N-2$, there exists $\mu \geq 1$ such that
$
\mathcal{E}_{A|B_1\cdots B_{N-1}}^\gamma \geq \mu\sum_{i=1}^{N-2} \mathcal{E}_i^\gamma + \mathcal{E}_{N-1}^\gamma.
$
The inequality~\eqref{e:chap3-ineq4} then follows from Theorem~\ref{t:chap3-thm1} and Lemma~\ref{l:chap3-lem1}.
\end{proof}

\begin{rem}\label{r:chap3-rem1}
Let $\rho_{AB_1\cdots B_{N-1}}$ be an $N$-partite quantum state.
\begin{itemize}
    \item[\textbf{(1)}] If $\mu = 1$, then~\eqref{e:chap3-ineq4} reduces to~\eqref{e:chap3-ineq1}.
    \item[\textbf{(2)}] If $\mu > 1$, then the bound in~\eqref{e:chap3-ineq4} is strictly tighter than that in~\eqref{e:chap3-ineq1}, since the function $h(x,y) = (1+y)^x - y^x$ is strictly increasing in $y$ for $x \geq 1$ and $y > 0$.
\end{itemize}
\end{rem}

%\textcolor{red}
{\textbf{Physical Interpretation.} Theorem \ref{t:chap3-thm2} simultaneously incorporates the hierarchy parameter $m$ and the state-dependent asymmetry parameter $\mu$ (defined in Lemma \ref{l:chap3-lem1}). When $\mu > 1$, the entanglement distribution is imbalanced, and the bound in \eqref{e:chap3-ineq4} is strictly tighter than the general bound in \eqref{e:chap3-ineq1}. This reflects the physical reality that entanglement distribution in many-body systems is typically non-uniform; our framework adapts to this asymmetry to provide more precise constraints. For instance, in quantum networks, entanglement distribution is often asymmetric due to channel losses or topological constraints, and our bounds can provide more accurate estimates for entanglement allocation in such networks.}

\bigskip
{\bf Comparison of the monogamy relations for entanglement measure $\mathcal{E}$}. By Remark~\ref{r:chap3-rem1} and \cite[Cor.~3.8]{CJMW1}, the following unified family of monogamy inequalities for the $\alpha$-th power of an entanglement measure $\mathcal{E}$ holds:

\begin{align*}
\mathcal{E}^{\alpha}_{A|B_{1}\cdots B_{N-1}} &\geq \mu^{\frac{\alpha}{\gamma}}\left(\mathcal{E}_{1}^{\alpha}+\sum_{l=2}^{N-2}\left[\sum_{n=0}^{\lfloor m-1\rfloor}\binom{\alpha/\gamma}{n}(l^{\frac{\alpha}{\gamma}-n}-(l-1)^{\frac{\alpha}{\gamma}-n})(\tau_{l}-(l-1))^{n}\right]\mathcal{E}^{\alpha}_{l}\right)\\
+&\left(\sum_{n=0}^{\lfloor m-1\rfloor}\binom{\alpha/\gamma}{n}\left[(1+\mu(N-2))^{\frac{\alpha}{\gamma}-n}-(\mu(N-2))^{\frac{\alpha}{\gamma}-n}\right]\mu^{n}\left[\tau_{N-1}-(N-2)\right]^{n}\right)\mathcal{E}^{\alpha}_{N-1}
\\ &\geq\mathcal{E}_{1}^{\alpha}+\sum_{l=2}^{N-1}\left(\sum_{n=0}^{\lfloor m-1\rfloor}\binom{\alpha/\gamma}{n}(l^{\frac{\alpha}{\gamma}-n}-(l-1)^{\frac{\alpha}{\gamma}-n})(\tau_{l}-(l-1))^{n}\right)\mathcal{E}^{\alpha}_{l} \\
&\geq \mathcal{E}_{1}^{\alpha}+\sum_{l=2}^{N-1}\left[l^{\frac{\alpha}{\gamma}}-(l-1)^{\frac{\alpha}{\gamma}})+(l-1)^{-\frac{\alpha}{\gamma}}-\tau_{l}^{-\frac{\alpha}{\gamma}}+\left(2^{m-1}-2 \right)\left[\tau_{l}-(l-1)\right]\right]\mathcal{E}^{\alpha}_{l} \\
&\geq \mathcal{E}_{1}^{\alpha}+\sum_{l=2}^{N-1}\left[l^{\frac{\alpha}{\gamma}}-(l-1)^{\frac{\alpha}{\gamma}}\right]\mathcal{E}^{\alpha}_{l} \geq \mathcal{E}_{1}^{\alpha}+\left(2^{\frac{\alpha}{\gamma}}-1\right)\sum_{l=2}^{N-1}\mathcal{E}^{\alpha}_{l}\geq \sum_{l=1}^{N-1}\mathcal{E}_{l}^{\alpha}
\end{align*}
for all $\alpha \geq m\gamma$ with $m \geq 2$, where $\mu \geq 1$ and $\tau_l = \frac{\sum_{i=1}^{l-1} \mathcal{E}_i^{\gamma}}{\mathcal{E}_l^{\gamma}}$ for $l = 2, \dots, N-1$.

Therefore, for $\alpha \geq m\gamma$ $(m \geq 2)$, the lower bound given in~\eqref{e:chap3-ineq4} is sharper than that of~\eqref{e:chap3-ineq1}. Furthermore, by Corollary~\ref{c:chap2-cor1}, the bound in~\eqref{e:chap3-ineq1} improves upon the earlier results in \cite[Lem.~2.1 (2.4)]{CJMW1} and thereby also those in \cite{GYG}.

\bigskip

We now demonstrate that the $\alpha$-th $(0\leq \alpha \leq \gamma)$ power monogamy relation derived here outperforms existing results using the example of concurrence.

Recall that the concurrence for a pure state $\rho_{AB} \in \mathcal{H}_A \otimes \mathcal{H}_B$ is defined as \cite{U,RBCGM}:
$ C(|\psi\rangle_{AB}) = \sqrt{2[1 - \operatorname{Tr}(\rho_A^2)]} = \sqrt{2[1 - \operatorname{Tr}(\rho_B^2)]}, $
where $\rho_A$ (resp. $\rho_B$) is the reduced density matrix by tracing over the subsystem $B$ (resp. $A$). For a mixed state $\rho_{AB}$, the concurrence is given by the convex roof extension \cite{YS}:
$ C(\rho_{AB}) = \min_{\{p_i, |\psi_i\rangle\}} \sum_i p_i C(|\psi_i\rangle), $
where the minimum is taken over all possible pure decompositions $\rho_{AB} = \sum_i p_i |\psi_i\rangle\langle\psi_i|$ with $p_i \geq 0$ and $\sum_i p_i = 1$.

The following result follows directly from Theorem~\ref{t:chap3-thm2}.

\begin{cor} \label{c:chap3-cor2}
Let $C$ be the concurrence entanglement measure satisfying the second-order monogamy relation~\eqref{e:chap1-ineq1}, and let $\rho_{AB_1\cdots B_{N-1}}$ be any $N$-qubit quantum state. Assume the sequence $\{C_i = C_{AB_{i'}}\}_{i=1}^{N-1}$ is arranged in descending order, such that $C_i^2 \geq C_{i+1}^2 > 0$ for $i = 1, \dots, N-2$. Then, for all $\alpha \geq 2m$ with $m \geq 1$, we have
\begin{equation}\label{e:chap3-ineq5}
\begin{aligned}
&C^{\alpha}_{A|B_{1}\cdots B_{N-1}} \geq \mu^{\frac{\alpha}{2}}\left(C_{1}^{\alpha}+\sum_{l=2}^{N-2}\left[\sum_{n=0}^{\lfloor m-1\rfloor}\binom{\alpha/2}{n}(l^{\frac{\alpha}{2}-n}-(l-1)^{\frac{\alpha}{2}-n})(\tau_{l}-(l-1))^{n}\right]C^{\alpha}_{l}\right)\\
&+\left(\sum_{n=0}^{\lfloor m-1\rfloor}\binom{\alpha/2}{n}\left[(1+\mu(N-2))^{\frac{\alpha}{2}-n}-(\mu(N-2))^{\frac{\alpha}{2}-n}\right]\mu^{n}\left[\tau_{N-1}-(N-2)\right]^{n}\right)C^{\alpha}_{N-1}
\end{aligned}
\end{equation}
where $\mu \geq 1$ and $\tau_l = \frac{\sum_{i=1}^{l-1} C_i^2}{C_l^2}$ for $l = 2, \dots, N-1$ with $N \geq 4$.
\end{cor}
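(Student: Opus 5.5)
The corollary is the specialization of Theorem~\ref{t:chap3-thm2} to $\mathcal{E}=C$ and $\gamma=2$, so the plan is to check that the hypotheses of that theorem hold for concurrence and then substitute $\alpha/\gamma=\alpha/2$.

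\textbf{Step 1: monogamy exponent.} Concurrence satisfies the squared monogamy relation
\[
C^2_{A|B_1\cdots B_{N-1}}\geq\sum_{i=1}^{N-1}C^2_{AB_i}
\]
for every $N$-qubit state. This is the Coffman--Kundu--Wootters inequality~\cite{CKW}, extended to $N$ qubits by Osborne and Verstraete~\cite{OV1}. So~\eqref{e:chap1-ineq1} holds with $\gamma=2$. Concurrence is also non-increasing under partial trace, which the proof of Lemma~\ref{l:chap3-lem1} uses. The reordering $C_i=C_{AB_{i'}}$ with $C_i^2\geq C_{i+1}^2>0$ is exactly the ordering assumed in Theorem~\ref{t:chap3-thm2}.

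\textbf{Step 2: the parameter $\mu$.} As in the proof of Theorem~\ref{t:chap3-thm2}, we need $\mu\geq 1$ with
\[
C^2_{A|B_1\cdots B_{N-1}}\geq \mu\sum_{i=1}^{N-2}C_i^2+C_{N-1}^2 .
\]
If $\sum_{i\le N-2}C_i^2>0$, set
\[
\mu=\frac{C^2_{A|B_1\cdots B_{N-1}}-C^2_{N-1}}{\sum_{i\le N-2}C_i^2}.
\]
The CKW-type inequality gives $\mu\geq1$, exactly as in Lemma~\ref{l:chap3-lem1}.

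\textbf{Step 3: substitute into Theorem~\ref{t:chap3-thm2}.} With $\gamma=2$ the condition $\alpha\geq m\gamma$ becomes $\alpha\geq 2m$. Each exponent $\alpha/\gamma$ becomes $\alpha/2$, and $\tau_l$ becomes $\sum_{i<l}C_i^2/C_l^2$. Inequality~\eqref{e:chap3-ineq4} then reads verbatim as~\eqref{e:chap3-ineq5}.

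The restriction $N\geq4$ only ensures that the middle sum $\sum_{l=2}^{N-2}$ is non-empty, so the displayed form is meaningful. For smaller $N$ the statement reduces to Corollary~\ref{c:chap3-cor1}.

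\textbf{Main obstacle.} The only step needing care is Step 2. One has to make sure that a single $\mu$ is used consistently both as the prefactor $\mu^{\alpha/2}$ and inside $(1+\mu(N-2))$, which is how Theorem~\ref{t:chap3-thm2} is set up. Beyond that bookkeeping, no new estimate is required.
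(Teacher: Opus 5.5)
Your proposal is correct and follows the paper's route. The paper states only that the corollary follows directly from Theorem~\ref{t:chap3-thm2} with $\mathcal{E}=C$ and $\gamma=2$, justified by the Coffman--Kundu--Wootters/Osborne--Verstraete inequality, which is exactly your Steps 1--3. Your explicit choice $\mu=(C^2_{A|B_1\cdots B_{N-1}}-C^2_{N-1})/\sum_{i\le N-2}C_i^2$ also usefully pins down the state-dependent $\mu$ that the paper leaves implicit.
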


\begin{exmp}
Let $\rho = |\psi\rangle\langle\psi|$ be the three-qubit state defined in \cite{AACJLT}:
\[ |\psi\rangle = \lambda_0 |000\rangle + \lambda_1 e^{i\varphi} |100\rangle + \lambda_2 |101\rangle + \lambda_3 |110\rangle + \lambda_4 |111\rangle, \]
where $\sum_{i=0}^4 \lambda_i^2 = 1$ and $\lambda_i \geq 0$ for all $i$. Then the concurrences are given by: $C_{A \mid BC}=2 \lambda_0 \sqrt{\lambda_2^2+\lambda_3^2+\lambda_4^2}$, $C_{AB}=2 \lambda_0 \lambda_2$, and $C_{AC}=2 \lambda_0 \lambda_3$.
Set $\lambda_0 = \lambda_1 = \lambda_2 = \frac{1}{2}$ and $\lambda_3 = \lambda_4 = \frac{\sqrt{2}}{4}$. Then we compute:
$C_{A|BC} = \frac{\sqrt{2}}{2}$, $C_{AB} = \frac{1}{2}$, and $C_{AC} = \frac{\sqrt{2}}{4}$.
By Lemma~\ref{l:chap3-lem1} and Corollary~\ref{c:chap3-cor1}, we obtain:$\mu = \frac{C_{A|BC}^2 - C_{AC}^2}{C_{AB}^2} = \frac{3}{2}$, and $t = \mu \cdot \frac{C_{AB}^2}{C_{AC}^2} = 3. $

Using Corollary~\ref{c:chap3-cor2}, for any $\alpha \geq 2$, the lower bound from our $\alpha$th-monogamy relation becomes
\begin{align*}
Z_1 &= \mu^{\frac{\alpha}{2}} C_{AB}^{\alpha} +
\sum_{n=0}^{\lfloor \alpha/2 - 1 \rfloor} \binom{\alpha/2}{n}
\left[(1+\mu)^{\frac{\alpha}{2} - n} - \mu^{\frac{\alpha}{2} - n} \right]
\mu^n \left(\frac{C_{AB}^2}{C_{AC}^2} - 1\right)^n C_{AC}^{\alpha} \\
&= \left(\frac{3}{2}\right)^{\frac{\alpha}{2}}\left(\frac{1}{2}\right)^{\alpha} +
\sum_{n=0}^{\lfloor \alpha/2 - 1 \rfloor} \binom{\alpha/2}{n}
\left[\left(\frac{5}{2}\right)^{\frac{\alpha}{2} - n} - \left(\frac{3}{2}\right)^{\frac{\alpha}{2} - n}\right]
\left(\frac{3}{2}\right)^n \left(\frac{\sqrt{2}}{4}\right)^{\alpha}.
\end{align*}

The corresponding bound from Theorem~\ref{t:chap3-thm1} is
\begin{align*}
Z_2 &= C_{AB}^{\alpha} +
\sum_{n=0}^{\lfloor \alpha/2 - 1 \rfloor} \binom{\alpha/2}{n}
\left(2^{\frac{\alpha}{2} - n} - 1\right) \left(\frac{C_{AB}^2}{C_{AC}^2} - 1\right)^n C_{AC}^{\alpha} \\
&= \left(\frac{1}{2}\right)^{\alpha} +
\sum_{n=0}^{\lfloor \alpha/2 - 1 \rfloor} \binom{\alpha/2}{n}
\left(2^{\frac{\alpha}{2} - n} - 1\right) \left(\frac{\sqrt{2}}{4}\right)^{\alpha}.
\end{align*}

For any $\alpha\geq 4$, the $\alpha$th-monogamy relation given in \cite[Thm. 3.7]{CJMW1} is
\begin{align*}
Z_3 &= C_{AB}^{\alpha} + \left(2^{\frac{\alpha}{2}} - \left(\frac{C_{AB}^2}{C_{AC}^2}\right)^{-\frac{\alpha}{2}} + \left(2^{\frac{\alpha}{2}-1} - 2\right)\frac{\alpha}{2} \left(\frac{C_{AB}^2}{C_{AC}^2} - 1\right)\right) C_{AC}^{\alpha} \\
&= \left(\frac{1}{2}\right)^{\alpha} + \left(2^{\frac{\alpha}{2}} - 2^{-\frac{\alpha}{2}} + \left(2^{\frac{\alpha}{2}-1} - 2\right)\frac{\alpha}{2}\right) \left(\frac{\sqrt{2}}{4}\right)^{\alpha}.
\end{align*}

Bounds from \cite{JLLF,ZF,JF1} respectively are
\begin{align*}
    Z_{_4}=(\frac{1}{2})^{\alpha}+\left(2^{\frac{\alpha}{2}}-1\right)(\frac{\sqrt{2}}{4})^{\alpha},~~
    Z_{_5}= (\frac{1}{2})^{\alpha}+\frac{\alpha}{2}(\frac{\sqrt{2}}{4})^{\alpha},~~
    Z_{_6}= (\frac{1}{2})^{\alpha}+(\frac{\sqrt{2}}{4})^{\alpha}.
\end{align*}
\end{exmp}

\begin{figure}[H]
    \centering
    \includegraphics[width=0.8\textwidth]{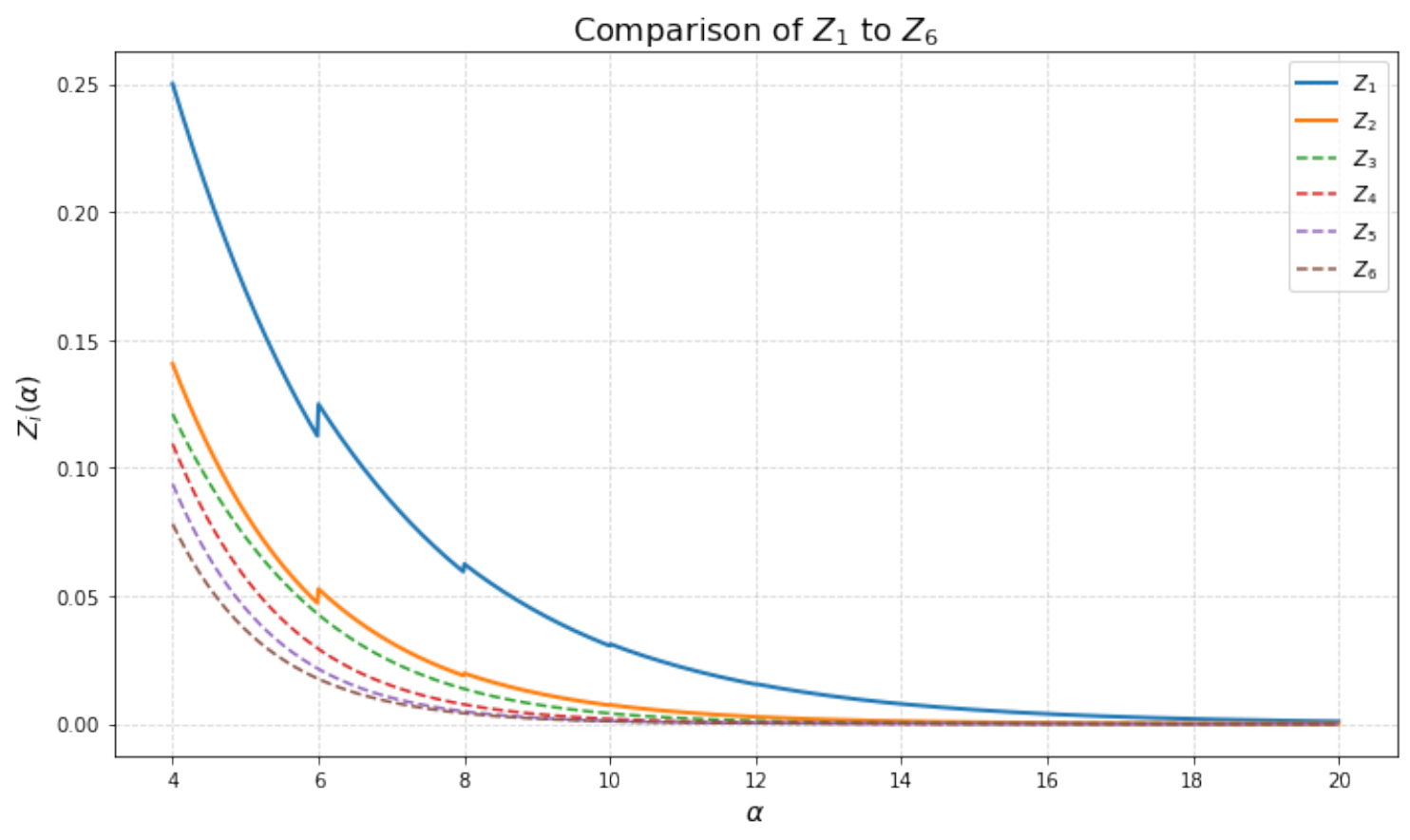}
    \caption{
    Comparison of six lower bounds $Z_i$ $(i = 1,\dots,6)$ for the $\alpha$-th power monogamy relation of concurrence under the three-qubit state from Example~\ref{c:chap3-cor2}. Bounds $Z_1$ and $Z_2$ are derived via piecewise expansion based on Corollary~\ref{c:chap3-cor2} and Theorem~\ref{t:chap3-thm1}, respectively, and exhibit a staircase structure with respect to $\alpha$. %\textcolor{red}
    {This stepwise structure reflects the piecewise refinement induced by increasing $m$: each ``step'' corresponds to activation of a higher-order correction term when $\alpha$ crosses $2\gamma$, mirroring the discrete nature of entanglement sharing in multi-qubit systems. Bounds $Z_3$ to $Z_6$ correspond to previously proposed results. Among them, $Z_1$ (which incorporates the state-dependent asymmetry parameter) provides the tightest lower bound for all $\alpha \geq 4$, demonstrating the efficacy of high-order refinement and the importance of accounting for entanglement asymmetry. The overall downward trend illustrates how the monogamy of entanglement strengthens as $\alpha$ increases.}
    }
    \label{fig:pZ123456}
\end{figure}

\section{\textbf{Polygamy relation for the $\beta$th power of assisted entanglement }}\label{s:polygamy}

The following Theorem is directly derived from \eqref{e:chap2-ineq8}.
\begin{thm}\label{t:chap4-thm1}
Let $\mathcal{E}$ be a bipartite assisted entanglement measure satisfying the $\delta$th-polygamy \eqref{e:chap1-ineq2} and $\rho_{AB_{1}\cdots B_{N-1}}$ any $N$-qubit quantum state. Arrange $\{\mathcal{E}_{a_{i}}=\mathcal{E}_{aAB_{i'}}|i=1,\cdots,N-1\}$ in descending order with $\mathcal{E}^{\delta}_{a_{i}}\geq \mathcal{E}^{\delta}_{a_{i+1}}>0$ for $i=1,\cdots,N-2$, then
\begin{equation}\label{e:chap4-ineq1}
\begin{aligned}
\mathcal{E}^{\beta}_{aA|B_{1}\cdots B_{N-1}} \leq \mathcal{E}_{a_{1}}^{\beta}+\sum_{l=2}^{N-1}\mathcal{E}^{\beta}_{a_{l}}\left(\sum_{n=0}^{ m-1}\binom{\beta/\delta}{n}(l^{\frac{\beta}{\delta}-n}-(l-1)^{\frac{\beta}{\delta}-n})(\tau_{a_{l}}-(l-1))^{n}\right)
\end{aligned}
\end{equation}
for $(m-1)\delta<\beta\leq  m\delta$ $(m_{\in\mathbb{Z}_{+}}\geq1)$, where $\tau_{a_l} = \frac{\sum_{i=1}^{l-1} \mathcal{E}_{a_i}^\delta}{\mathcal{E}_{a_l}^\delta}$ for $l = 2, 3, \dots, N-1$.
%\begin{equation}\label{e:chap4-ineq2}
%\begin{aligned}
%\mathcal{E}^{\beta}_{aA|B_{1}\cdots B_{N-1}} \leq \mathcal{E}_{a_{1}}^{\beta}+\sum_{l=2}^{N-1}\mathcal{E}^{\beta}_{a_{l}}\left(\sum_{n=0}^{\lceil m-1\rceil}\binom{\beta/\delta}{n}(l^{\frac{\beta}{\delta}-n}-(l-1)^{\frac{\beta}{\delta}-n})(\tau_{a_{l}}-(l-1))^{n}\right)
%\end{aligned}
%\end{equation}
%for $\lfloor m\rfloor\delta<\beta\leq m\delta$ $(m\geq1)$, where $\tau_{a_l} = \frac{\sum_{i=1}^{l-1} \mathcal{E}_{a_i}^\delta}{\mathcal{E}_{a_l}^\delta}$ for $l = 2, 3, \dots, N-1$.
\end{thm}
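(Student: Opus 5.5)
We derive \eqref{e:chap4-ineq1} from the $\delta$th-power polygamy inequality \eqref{e:chap1-ineq2} combined with Lemma~\ref{l:chap2-lem2}(2), taking $x=\beta/\delta$ and $p_i=\mathcal{E}_{a_i}^\delta$.

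\textbf{Step 1 (raise the polygamy inequality to the power $x$).} Set $x:=\beta/\delta$. The hypothesis $(m-1)\delta<\beta\le m\delta$ with $m\in\mathbb{Z}_+$ is exactly $m-1<x\le m$, which is the range in Lemma~\ref{l:chap2-lem2}(2). Inequality \eqref{e:chap1-ineq2} reads
\[
\mathcal{E}_{aA|B_1\cdots B_{N-1}}^\delta\le \sum_{i=1}^{N-1}\mathcal{E}_{a_i}^\delta ,
\]
where the right-hand side has been reordered so that the $\mathcal{E}_{a_i}$ are in descending order. Both sides are nonnegative and $x>0$, so $s\mapsto s^x$ is nondecreasing on $[0,\infty)$. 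Raising both sides to the power $x$ therefore gives
\[
\mathcal{E}_{aA|B_1\cdots B_{N-1}}^\beta\le\Big(\sum_{i=1}^{N-1}\mathcal{E}_{a_i}^\delta\Big)^{x}.
\]

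\textbf{Step 2 (apply Lemma~\ref{l:chap2-lem2}(2)).} Take $p_i:=\mathcal{E}_{a_i}^\delta$ for $i=1,\dots,N-1$, so the lemma is used with $N-1$ terms in place of $N$. By hypothesis these are positive and non-increasing. Then $p_l^x=\mathcal{E}_{a_l}^\beta$ and $p_1^x=\mathcal{E}_{a_1}^\beta$. Moreover,
\[
\tau_l=\frac{p_1+\cdots+p_{l-1}}{p_l}=\frac{\sum_{i=1}^{l-1}\mathcal{E}_{a_i}^\delta}{\mathcal{E}_{a_l}^\delta}=\tau_{a_l}.
\]
Substituting into \eqref{e:chap2-ineq8} yields exactly the right-hand side of \eqref{e:chap4-ineq1}, and chaining this with Step 1 completes the argument.

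\textbf{Main obstacle.} The substance lies in Lemma~\ref{l:chap2-lem2}(2), which we assume; the theorem itself is only a specialization of it. If that lemma had to be justified, one would argue by induction on $N$. For the inductive step, write
\[
\Big(\sum_{i\le l}p_i\Big)^x=p_l^x\,(1+\tau_l)^x
\]
and apply the upper bound \eqref{e:chap2-ineq2} of Lemma~\ref{l:chap2-lem1} with $t=\tau_l$ and $k=l-1$. This requires $\tau_l\ge l-1$, which holds because the $p_i$ are non-increasing. One then uses $p_l^x\tau_l^x=(\sum_{i<l}p_i)^x$ to telescope.

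At this point the only further check is the edge case where the descending ordering has ties, or some $\mathcal{E}_{a_i}=0$. The former is harmless. The latter is excluded by the strict positivity hypothesis, which ensures every $\tau_{a_l}$ is well defined.
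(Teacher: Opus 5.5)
Your proposal is correct and is the paper's own route. The paper states that the theorem follows directly from \eqref{e:chap2-ineq8}: raise the $\delta$th-power polygamy inequality to the power $x=\beta/\delta\in(m-1,m]$ and apply Lemma~\ref{l:chap2-lem2}(2) with $p_i=\mathcal{E}_{a_i}^\delta$. Your sketch of that lemma (induction on the number of terms, using Lemma~\ref{l:chap2-lem1}(2) with $t=\tau_l\ge k=l-1$) is the same argument as the appendix proof.
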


%\textcolor{red}
{\textbf{Physical Interpretation.} In the polygamy inequality \eqref{e:chap4-ineq1}, the parameter $m$ plays a role analogous to that in the monogamy case: it controls the depth of correlations considered in the upper bound. For $\beta$ within the interval $((m-1)\delta, m\delta]$, the bound incorporates up to $m-1$ correction terms. This enables a finer-grained description of the sharability of assisted entanglement, which is particularly relevant in quantum networks where entanglement assistance is often utilized to distribute correlations among multiple parties.}

Similar to Lemma \ref{l:chap3-lem1}, we have
\begin{lem}\label{l:chap4-lem1} For a tripartite quantum state $\rho_{ABC}$, there exist $0\leq\nu\leq1$ satisfies
\begin{equation}\label{e:chap4-ineq3}
\begin{aligned}
\mathcal{E}^{\delta}_{a_{A|BC}}\leq \nu\mathcal{E}^{\delta}_{a_{AB}}+\mathcal{E}^{\delta}_{a_{AC}}\leq \mathcal{E}^{\delta}_{a_{AB}}+\mathcal{E}^{\delta}_{a_{AC}},
\end{aligned}
\end{equation}
where $\nu$ is related to the quantum state $\rho_{ABC}$, and $\mathcal{E}^{\delta}_{a_{AB}}=\max\{\mathcal{E}^{\delta}_{a_{AB}},\mathcal{E}^{\delta}_{a_{AC}}\}$.
\end{lem}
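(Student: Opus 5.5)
The plan mirrors the proof of Lemma~\ref{l:chap3-lem1}, with the inequalities reversed. Throughout, I take the $\delta$th-power polygamy relation~\eqref{e:chap1-ineq2} for the tripartite case $N=3$ as given, namely
\[
\mathcal{E}^{\delta}_{a_{A|BC}}\leq \mathcal{E}^{\delta}_{a_{AB}}+\mathcal{E}^{\delta}_{a_{AC}}.
\]
As in the statement, I assume without loss of generality that $\mathcal{E}^{\delta}_{a_{AB}}=\max\{\mathcal{E}^{\delta}_{a_{AB}},\mathcal{E}^{\delta}_{a_{AC}}\}$. If the maximum is attained by $\mathcal{E}^{\delta}_{a_{AC}}$, I relabel $B$ and $C$.

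\textbf{Case $\mathcal{E}^{\delta}_{a_{AB}}=0$.} Then both two-party terms vanish. The polygamy inequality gives $\mathcal{E}^{\delta}_{a_{A|BC}}\leq 0$, so every quantity in~\eqref{e:chap4-ineq3} is zero. The chain holds with equality for any choice of $\nu$; I set $\nu=1$ for definiteness.

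\textbf{Case $\mathcal{E}^{\delta}_{a_{AB}}>0$.} Here I use the basic monotonicity of the assisted measure: tracing out a subsystem on the $BC$ side cannot increase the assisted entanglement. This should hold for the standard assisted measures, e.g.\ concurrence of assistance, since an ensemble decomposition of $\rho_{ABC}$ induces one of $\rho_{AC}$. It gives $\mathcal{E}_{a_{A|BC}}\geq \mathcal{E}_{a_{AC}}$. I then define
\[
\nu:=\frac{\mathcal{E}^{\delta}_{a_{A|BC}}-\mathcal{E}^{\delta}_{a_{AC}}}{\mathcal{E}^{\delta}_{a_{AB}}}.
\]
With this choice the first inequality in~\eqref{e:chap4-ineq3} holds with equality. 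The numerator is nonnegative by monotonicity, so $\nu\geq 0$. The polygamy inequality~\eqref{e:chap1-ineq2} says exactly that the numerator is at most $\mathcal{E}^{\delta}_{a_{AB}}$, so $\nu\leq 1$. Finally, $\nu\leq 1$ together with $\mathcal{E}^{\delta}_{a_{AB}}\geq 0$ gives the second inequality $\nu\mathcal{E}^{\delta}_{a_{AB}}+\mathcal{E}^{\delta}_{a_{AC}}\leq \mathcal{E}^{\delta}_{a_{AB}}+\mathcal{E}^{\delta}_{a_{AC}}$.

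The only nonroutine step is justifying $\nu\geq0$, i.e.\ $\mathcal{E}_{a_{A|BC}}\geq \mathcal{E}_{a_{AC}}$. This is where the monotonicity of the assisted measure under partial trace is needed. I would state it as a standing property of $\mathcal{E}_a$, the analogue of the monotonicity used for $\mathcal{E}$ in Lemma~\ref{l:chap3-lem1}. If it fails for some measure, a fallback is to define $\nu$ as the maximum of $0$ and the quotient above. Then $\nu=0$ and the first inequality becomes $\mathcal{E}^{\delta}_{a_{A|BC}}\leq \mathcal{E}^{\delta}_{a_{AC}}$, which is exactly what holds when the quotient is negative. So the lemma survives, but without equality in the first inequality.
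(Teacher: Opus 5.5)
Your proof is correct and is essentially the paper's argument. Assume $\mathcal{E}^{\delta}_{a_{AB}}$ is the maximum, take $\nu=1$ in the degenerate case, and otherwise define $\nu$ so that the first inequality is an equality; then $\nu\le 1$ comes from the polygamy relation and $\nu\ge 0$ from monotonicity of $\mathcal{E}_a$ under partial trace, which the paper likewise just asserts.

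Your heuristic for that monotonicity is only clean for pure $\rho_{ABC}$. For mixed states, the induced decomposition of $\rho_{AC}$ consists of mixed states, and concavity of $\mathcal{E}_a$ goes the wrong way. Your $\max\{0,\cdot\}$ fallback makes this moot, and in fact the statement as written already holds with $\nu=1$ by polygamy alone.
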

\begin{proof}
 Since assisted entanglement measure $\mathcal{E}_{a}$ is non-increasing under partial traces, we have $\mathcal{E}^{\gamma}_{a_{A|BC}}\geq \max\{\mathcal{E}^{\gamma}_{a_{AB}},\mathcal{E}^{\gamma}_{a_{AC}}\}$. Suppose $\mathcal{E}^{\gamma}_{a_{AB}}=\max\{\mathcal{E}^{\gamma}_{a_{AB}},\mathcal{E}^{\gamma}_{a_{AC}}\}$. {\bf(1)} If $\mathcal{E}^{\gamma}_{aAB}=0$, then suppose $\nu=1$. {\bf(2)} If $\mathcal{E}^{\gamma}_{a_{AB}}>0$, set $\nu=\frac{\mathcal{E}^{\gamma}_{a_{A|BC}} -\mathcal{E}^{\gamma}_{a_{AC}} }{\mathcal{E}^{\gamma}_{a_{AB}}}$, then we have $\mathcal{E}^{\gamma}_{a_{A|BC}}= \nu\mathcal{E}^{\gamma}_{a_{AB}}+\mathcal{E}^{\gamma}_{a_{AC}}$. And by \eqref{e:chap1-ineq2}, we get $0\leq\nu\leq1$.
\end{proof}

\begin{thm}\label{t:chap4-thm2}
Let $\mathcal{E}$ be a bipartite assisted entanglement measure satisfying the $\delta$th-polygamy \eqref{e:chap1-ineq2} and $\rho_{AB_{1}\cdots B_{N-1}}$ any $N$-qubit quantum state. Arrange $\{\mathcal{E}_{a_{i}}=\mathcal{E}_{aAB_{i'}}|i=1,\cdots,N-1\}$ in descending order with $\mathcal{E}^{\delta}_{a_{i}}\geq \mathcal{E}^{\delta}_{a_{i+1}}>0$ for $i=1,\cdots,N-2$, then
\begin{equation}\label{e:chap4-ineq4}
\begin{aligned}
&\mathcal{E}^{\beta}_{aA|B_{1}\cdots B_{N-1}} \leq \nu^{\frac{\beta}{\delta}}\left( \mathcal{E}_{a_{1}}^{\beta}+\sum_{l=2}^{N-2}\left[\sum_{n=0}^{ m-1}\binom{\beta/\delta}{n}(l^{\frac{\beta}{\delta}-n}-(l-1)^{\frac{\beta}{\delta}-n})(\tau_{a_{l}}-(l-1))^{n}\right]\mathcal{E}^{\beta}_{a_{l}}\right)\\
&+\left(\sum_{n=0}^{\lceil m-1\rceil}\binom{\beta/\delta}{n}\left[(1+\nu(N-2))^{\frac{\beta}{\delta}-n}-(\nu(N-2))^{\frac{\beta}{\delta}-n}\right]\nu^{n}\left[\tau_{a_{N-1}}-(N-2)\right]^{n}\right)\mathcal{E}^{\beta}_{a_{N-1}}
\end{aligned}
\end{equation}
for $(m-1)\delta<\beta\leq  m\delta$ $(m_{\in\mathbb{Z}_{+}}\geq1)$, where $0\leq\nu\leq1$ and $\tau_{a_{l}}=\frac{\mathcal{E}^{\delta}_{a_{1}}+\mathcal{E}^{\delta}_{a_{2}}+\cdots+\mathcal{E}^{\delta}_{a_{l-1}}}{\mathcal{E}^{\delta}_{a_{l}}},l=2,3,\cdots,N-1$ $(N\geq4)$.
%\begin{equation}\label{e:chap4-ineq5}
%\begin{aligned}
%&\mathcal{E}^{\beta}_{aA|B_{1}\cdots B_{N-1}} \leq \nu^{\frac{\beta}{\delta}}\left( \mathcal{E}_{a_{1}}^{\beta}+\sum_{l=2}^{N-2}\left[\sum_{n=0}^{\lceil m-1\rceil}\binom{\beta/\delta}{n}(l^{\frac{\beta}{\delta}-n}-(l-1)^{\frac{\beta}{\delta}-n})(\tau_{a_{l}}-(l-1))^{n}\right]\mathcal{E}^{\beta}_{a_{l}}\right)\\
%&+\left(\sum_{n=0}^{\lceil m-1\rceil}\binom{\beta/\delta}{n}\left[(1+\nu(N-2))^{\frac{\beta}{\delta}-n}-(\nu(N-2))^{\frac{\beta}{\delta}-n}\right]\nu^{n}\left[\tau_{a_{N-1}}-(N-2)\right]^{n}\right)\mathcal{E}^{\beta}_{a_{N-1}}
%\end{aligned}
%\end{equation}
%for $\lfloor m\rfloor\delta<\beta\leq m\delta$ $(m\geq1)$, where $0\leq\nu\leq1$ and $\tau_{a_{l}}=\frac{\mathcal{E}^{\delta}_{a_{1}}+\mathcal{E}^{\delta}_{a_{2}}+\cdots+\mathcal{E}^{\delta}_{a_{l-1}}}{\mathcal{E}^{\delta}_{a_{l}}},l=2,3,\cdots,N-1$ $(N\geq4)$.
\end{thm}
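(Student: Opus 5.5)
The plan mirrors the proof of Theorem~\ref{t:chap3-thm2}, with the inequalities reversed and Lemma~\ref{l:chap2-lem1}(2) and Lemma~\ref{l:chap2-lem2}(2) in place of their part (1) counterparts.

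\emph{Step 1: introduce $\nu$.} The $\delta$-polygamy relation~\eqref{e:chap1-ineq2} gives
\[
\mathcal{E}^{\delta}_{aA|B_1\cdots B_{N-1}}\leq \sum_{i=1}^{N-1}\mathcal{E}^{\delta}_{a_i}.
\]
Arguing as in Lemma~\ref{l:chap4-lem1}, we choose $\nu\in[0,1]$ so that
\[
\mathcal{E}^{\delta}_{aA|B_1\cdots B_{N-1}}\leq \nu S+\mathcal{E}^{\delta}_{a_{N-1}},\qquad S:=\sum_{i=1}^{N-2}\mathcal{E}^{\delta}_{a_i}.
\]
Concretely, set $\nu=(\mathcal{E}^{\delta}_{aA|B_1\cdots B_{N-1}}-\mathcal{E}^{\delta}_{a_{N-1}})/S$ when this lies in $[0,1]$. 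If it is negative, the inequality holds for any $\nu\geq 0$, so we take $\nu=0$. Raising to the power $x:=\beta/\delta\in(m-1,m]$ and factoring out $\mathcal{E}^{\delta}_{a_{N-1}}$, with $t:=\nu S/\mathcal{E}^{\delta}_{a_{N-1}}=\nu\tau_{a_{N-1}}$, we get
\[
\mathcal{E}^{\beta}_{aA|B_1\cdots B_{N-1}}\leq \mathcal{E}^{\beta}_{a_{N-1}}(1+t)^x.
\]

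\emph{Step 2: apply Lemma~\ref{l:chap2-lem1}(2).} Take $k:=\nu(N-2)$. Since $\tau_{a_{N-1}}\geq N-2$ by the descending ordering, we have $t\geq k$, and then
\[
t-k=\nu\,[\tau_{a_{N-1}}-(N-2)].
\]
Lemma~\ref{l:chap2-lem1}(2) gives
\[
(1+t)^x\leq t^x+\sum_{n=0}^{m-1}\binom{x}{n}\bigl[(1+k)^{x-n}-k^{x-n}\bigr](t-k)^n .
\]
Multiplying by $\mathcal{E}^{\beta}_{a_{N-1}}$ turns the second term exactly into the last line of~\eqref{e:chap4-ineq4}: the factor $(t-k)^n$ becomes $\nu^n[\tau_{a_{N-1}}-(N-2)]^n$, and $\lceil m-1\rceil=m-1$ since $m$ is an integer. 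The first term becomes $\mathcal{E}^{\beta}_{a_{N-1}}t^x=\nu^{x}S^x$.

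\emph{Step 3: bound $S^x$.} Apply Lemma~\ref{l:chap2-lem2}(2) to the $N-2$ numbers $p_i=\mathcal{E}^{\delta}_{a_i}$, $i\leq N-2$, with exponent $x$. This bounds $S^x$ by
\[
\mathcal{E}^{\beta}_{a_1}+\sum_{l=2}^{N-2}\Bigl[\sum_{n=0}^{m-1}\binom{x}{n}\bigl(l^{x-n}-(l-1)^{x-n}\bigr)(\tau_{a_l}-(l-1))^n\Bigr]\mathcal{E}^{\beta}_{a_l},
\]
which gives the first part of~\eqref{e:chap4-ineq4} after multiplying by $\nu^{x}$. This is where $N\geq4$ is used, so that $S$ has at least two terms.

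\emph{Main obstacle.} The delicate step is the legitimacy of Lemma~\ref{l:chap2-lem1}(2) with $k=\nu(N-2)$. The lemma requires $k>0$, which fails when $\nu=0$. In that case the bound must be checked directly. With $\nu=0$ we have $t=0$, so the left-hand estimate is $(1+t)^x=1$. On the right, every term with $n\geq1$ vanishes because $\nu^n=0$, and the $n=0$ term is $1^x-0^x=1$ since $x>0$. So the two sides agree and the case is harmless.

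For $k>0$, we should verify that the proof of Lemma~\ref{l:chap2-lem1}(2) needs only $t\geq k>0$ and not $k\geq1$. This is plausible: the natural proof is a Taylor expansion of $g(t)=(1+t)^x-t^x$ about $k$, where the $m$-th derivative of $g$ has a fixed sign for $m-1<x\leq m$, and that argument needs only $t\geq k>0$.
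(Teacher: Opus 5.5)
Your proposal is correct and is essentially the paper's argument: choose $\nu\in[0,1]$ with $\mathcal{E}^{\delta}_{aA|B_1\cdots B_{N-1}}\leq \nu S+\mathcal{E}^{\delta}_{a_{N-1}}$, where $S=\sum_{i=1}^{N-2}\mathcal{E}^{\delta}_{a_i}$. Then apply Lemma~\ref{l:chap2-lem1}(2) with $k=\nu(N-2)$ and $t=\nu\tau_{a_{N-1}}$, and bound $\nu^{\beta/\delta}S^{\beta/\delta}$ via Lemma~\ref{l:chap2-lem2}(2); this is what the paper's one-line appeal to Theorem~\ref{t:chap4-thm1} and Lemma~\ref{l:chap4-lem1} amounts to, and your check of the degenerate case $\nu=0$ together with the remark that the Taylor-remainder argument needs only $t\geq k>0$ supplies details the paper leaves implicit.
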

\begin{proof} By the $\delta$th-polygamy \eqref{e:chap1-ineq2},
$\mathcal{E}^{\gamma}_{a_{A|B_{1}\cdots B_{N-1}}}\leq \mathcal{E}^{\gamma}_{a_{1}}+\mathcal{E}^{\gamma}_{a_{2}}+\cdots+\mathcal{E}^{\gamma}_{a_{N-1}}
$.
Using the ordering condition $\mathcal{E}^{\gamma}_{a_{i}}\geq \mathcal{E}^{\gamma}_{a_{i+1}}>0$ for $i=1,\cdots,N-2$, there exists $0\leq\nu\leq1$ such that $\mathcal{E}^{\gamma}_{a_{A|B_{1}\cdots B_{N-1}}}\leq \nu\left(\mathcal{E}^{\gamma}_{a_{1}}+\cdots+\mathcal{E}^{\gamma}_{a_{N-2}}\right)+\mathcal{E}^{\gamma}_{a_{N-1}}
$. So \eqref{e:chap4-ineq4} drived by Theorem \ref{t:chap4-thm1} and Lemma \ref{l:chap4-lem1}.
\end{proof}

%\begin{rem}\label{r:chap4-rem1} Let $\rho_{AB_{1}\cdots B_{N-1}}$ be a quantum state.

%{\bf(1)} If $\nu=1,$ then inequalities \eqref{e:chap4-ineq4} reduce to \eqref{e:chap4-ineq1}.

%{\bf(2)} If $0<\nu<1,$ then the upper bound in \eqref{e:chap4-ineq4} is tighter than that of \eqref{e:chap4-ineq1}.
%\end{rem}

%\textcolor{red}
{\textbf{Physical Interpretation.}Theorem \ref{t:chap4-thm2} combines the hierarchy parameter $m$ and the state-dependent parameter $\nu$, which quantifies the asymmetry in the distribution of assisted entanglement. When $0 < \nu < 1$, the bound in \eqref{e:chap4-ineq4} is tighter than the general bound in \eqref{e:chap4-ineq1}, reflecting that the shareability of entanglement is often constrained by the weakest link. This state-adaptive bound provides more realistic constraints for quantum communication tasks that rely on entanglement assistance, such as remote state preparation or distributed quantum computing.}

\bigskip
%Recall that the negativity of bipartite state $\rho_{A B}$ is defined by \cite{VW}: $\mathcal{N}\left(\rho_{AB}\right)=\left\|\rho_{AB}^{T_{A}}\right\|-1$, where $T_{A}$ refers to the partial transposition and $\|X\|=\operatorname{Tr} \sqrt{X X^{\dagger}}$ is the trace norm of $X$. For a mixed state $\rho_{AB}$, the convex-roof extended negativity (CREN) is defined by $\mathcal{N}_{ c}\left(\rho_{A B}\right)=\min \sum_i p_i \mathcal{N}\left(\left|\psi_i\right\rangle_{AB}\right)$ where the minimum is taken over all possible pure state decompositions $\left\{p_i,\left|\psi_i\right\rangle_{AB}\right\}$ of $\rho_{AB}$. The convex-roof extended negativity of assistance (CRENoA) is then defined by $\mathcal{N}_{a}\left(\rho_{AB}\right)=\max \sum_i p_i \mathcal{N}\left(\left|\psi_i\right\rangle_{AB}\right)$, where the maximum is taken over all possible pure state decompositions $\left\{p_i,\left|\psi_i\right\rangle_{AB}\right\}$ of $\rho_{AB}$. For convenience, we will abbreviate: $\mathcal{N}_{aAB}=\mathcal{N}_{a}\left(\rho_{AB}\right)$.

%For any any $N$-qubit pure state $|\psi\rangle_{AB_{1}\cdots B_{N-1}}$, \cite{KDS2}
%\begin{equation}\label{e:chap4-ineq4}
%\begin{aligned}
%\mathcal{N}^{2}(|\psi\rangle_{A|B_{1}\cdots B_{N-1}})\leq \sum_{i=1}^{N-1} \mathcal{N}_{a}^{2}(\rho_{AB_{i}})
%\end{aligned}
%\end{equation}
 Now consider the concurrence of assistance (CoA) as an illustrative case. CoA is defined by \cite{OV1}
 \begin{align*}
C_{a}\left(\rho_{A B}\right)=\max _{\left\{p_i,\left|\psi_i\right\rangle\right\}} \sum_i p_i C\left(\left|\psi_i\right\rangle\right),
\end{align*}
where the maximum is taken over all possible convex roofs of pure state decompositions: $\rho_{A B}=\sum_i p_i\left|\psi_i\right\rangle\left\langle\psi_i\right|$ with $p_i \geqslant 0, \sum_i p_i=1$ and $\left|\psi_i\right\rangle \in \mathcal{H}_A \otimes \mathcal{H}_B$.

For an $N$-partite pure state $|{\psi\rangle}_{AB_1\cdots B_{N-1}}$, the concurrence with respect to the partition $A|B_1\cdots B_{N-1}$ satisfies
the polygamy relation \cite{GBS}:
\begin{equation}\label{e:chap4-ineq6}
C^{2}(|\psi\rangle_{A|B_{1}\cdots B_{N-1}})\leq C^{2}_{a_{AB_{1}}}+C^{2}_{a_{AB_{2}}}+\cdots C^{2}_{a_{AB_{N-1}}}.
\end{equation}

\begin{cor}\label{c:chap4-cor1}
 Let $|\psi\rangle_{AB_{1}\cdots B_{N-1}}$ be any $N$-qubit pure state and $C_{a}$ be bipartite assisted quantum measure CoA satisfying the polygamy relation \eqref{e:chap4-ineq6}. Arrange $\{C_{a_{i}}=C_{aAB_{i'}}|i=1,\cdots,N-1\}$ in descending order, and that $C^{\delta}_{a_{i}}\geq C^{\delta}_{a_{i+1}}>0$ for $i=1,\cdots,N-2$. Then for $2(m-1)<\beta\leq 2m$ with
$m_{\in\mathbb{Z}_{+}}\geq1$, we have
\begin{equation}\label{e:chap4-ineq7}
\begin{aligned}
&C^{\beta}_{aA|B_{1}\cdots B_{N-1}} \leq \nu^{\frac{\beta}{2}}\left( C_{a_{1}}^{\beta}+\sum_{l=2}^{N-2}\left[\sum_{n=0}^{m-1}\binom{\beta/2}{n}(l^{\frac{\beta}{2}-n}-(l-1)^{\frac{\beta}{2}-n})(\tau_{a_{l}}-(l-1))^{n}\right]C^{\beta}_{a_{l}}\right)\\
&+\left(\sum_{n=0}^{m-1}\binom{\beta/2}{n}\left[(1+\nu(N-2))^{\frac{\beta}{2}-n}-(\nu(N-2))^{\frac{\beta}{2}-n}\right]\nu^{n}\left[\tau_{a_{N-1}}-(N-2)\right]^{n}\right)C^{\beta}_{a_{N-1}},
\end{aligned}
\end{equation}
where $0\leq\nu\leq1$ and  $\tau_{a_l} = \frac{\sum_{i=1}^{l-1} C_{a_i}^2}{C_{a_l}^2}$ for $l = 2, \dots, N-1$.
\end{cor}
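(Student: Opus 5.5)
The plan is to obtain Corollary~\ref{c:chap4-cor1} as the specialization $\delta=2$ of Theorem~\ref{t:chap4-thm2}, with $\mathcal{E}_a=C_a$. The one point that needs care is which polygamy inequality is actually available. Relation~\eqref{e:chap4-ineq6} bounds the pure-state concurrence $C(|\psi\rangle_{A|B_1\cdots B_{N-1}})$, not $C_a$ of the whole state, by $\sum_i C_{a_{AB_i}}^2$. For a pure state $|\psi\rangle_{AB_1\cdots B_{N-1}}$ the only decomposition is the trivial one, so $C_{aA|B_1\cdots B_{N-1}}=C(|\psi\rangle_{A|B_1\cdots B_{N-1}})$. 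Hence \eqref{e:chap4-ineq6} is exactly the $\delta$th-power polygamy inequality~\eqref{e:chap1-ineq2} with $\delta=2$ for this state, which is all Theorems~\ref{t:chap4-thm1} and~\ref{t:chap4-thm2} use.

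Next I will retrace the proof of Theorem~\ref{t:chap4-thm2} with $\delta=2$ to confirm every ingredient is present.
\begin{enumerate}[label=\textbf{(\roman*)}]
\item The ordering hypothesis is $C_{a_i}^2\geq C_{a_{i+1}}^2>0$; the $C^\delta_{a_i}$ in the statement is read with $\delta=2$.
\item Set $S=\sum_{i\le N-2}C_{a_i}^2$ and $\nu:=(C_{aA|B_1\cdots B_{N-1}}^2-C_{a_{N-1}}^2)/S$, which is well defined since $S>0$. This mirrors Lemma~\ref{l:chap4-lem1} applied to the grouping $A|(B_1\cdots B_{N-2})B_{N-1}$.
\item Then $\nu\le 1$ follows from \eqref{e:chap4-ineq6}.
\item Also $\nu\ge0$ holds, because $C_a$ does not increase under partial trace, giving $C_{aA|B_1\cdots B_{N-1}}\ge C_{a_{N-1}}$.
\end{enumerate}
This yields $C_{aA|B_1\cdots B_{N-1}}^2 = \nu S + C_{a_{N-1}}^2$.

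Now raise to the power $x=\beta/2\in(m-1,m]$. Apply Lemma~\ref{l:chap2-lem2}(2), equivalently Theorem~\ref{t:chap4-thm1}, to the first block: this gives $\nu^{x}$ times the bracketed sum over $l\le N-2$. For the last term, apply Lemma~\ref{l:chap2-lem1}(2) with $t=\nu S/C_{a_{N-1}}^2=\nu\tau_{a_{N-1}}$ and $k=\nu(N-2)$. The condition $t\ge k$ holds because $\tau_{a_{N-1}}\ge N-2$, by the descending order. This gives $(t-k)^n=\nu^n(\tau_{a_{N-1}}-(N-2))^n$, which matches the last line of \eqref{e:chap4-ineq7}.

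Finally, the $\lceil m-1\rceil$ in \eqref{e:chap4-ineq4} equals $m-1$ because $m\in\mathbb{Z}_+$, so the summation limits agree. The case $\nu=0$ requires $k>0$ in Lemma~\ref{l:chap2-lem1}; there I will treat it directly, or by continuity in $k$.

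The main obstacle is this chaining step. It must combine the upper bound on $(\nu S)^x$ with the upper bound for $(1+t)^x$, and in the case $\nu=0$ the hypothesis $k>0$ fails. I expect these to be handled exactly as in the proof of Theorem~\ref{t:chap4-thm2}, which is assumed here. The corollary then follows by substituting $\delta=2$.
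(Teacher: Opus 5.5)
Your proposal is correct and follows the paper's route. The corollary is Theorem~\ref{t:chap4-thm2} with $\delta=2$ and $\mathcal{E}_a=C_a$, and that theorem's proof writes $C_{aA|B_1\cdots B_{N-1}}^2=\nu S+C_{a_{N-1}}^2$ (with your $S$) and then combines Lemma~\ref{l:chap2-lem2}(2) on $S^{\beta/2}$ with Lemma~\ref{l:chap2-lem1}(2) at $k=\nu(N-2)$, $t=\nu\tau_{a_{N-1}}$, exactly as you do. Your remark that $C_a=C$ on the global pure state, so that \eqref{e:chap4-ineq6} is the needed $\delta=2$ polygamy inequality, fills a point the paper leaves implicit. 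The chaining step you flag is harmless, since both estimates are upper bounds and $(\nu S)^{\beta/2}=\nu^{\beta/2}S^{\beta/2}$, and the case $\nu=0$ gives equality directly.
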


\begin{exmp}
Let $\rho=|\psi\rangle\langle\psi|$ be the three-qubit state defined by \cite{AACJLT}:
$$
|\psi\rangle=\lambda_{0}|000\rangle+\lambda_{1}e^{i \varphi}|100\rangle+\lambda_{2}|101\rangle+\lambda_{3}|110\rangle+\lambda_{4}|111\rangle.
$$
where $\sum_{i=0}^4 \lambda_i^2=1$, and $\lambda_{i}\geq 0$ for $i=0,1,2,3,4$.
Then  $C(|\psi\rangle_{A|BC})=2 \lambda_0 \sqrt{\lambda_2^2+\lambda_3^2+\lambda_4^2}$, $C_{a_{AB}}=2 \lambda_0 \sqrt{\lambda^{2}_{2}+\lambda^{2}_{4}}$, and $C_{a_{AC}}=2 \lambda_0 \sqrt{\lambda^{2}_{3}+\lambda^{2}_{4}}$.
Set $\lambda_0=\frac{1}{9}, \lambda_1=0, \lambda_2=\frac{2}{9}, \lambda_3=\frac{2\sqrt{10}}{9}, \lambda_4=\frac{2}{3}$.
Then we have $C_{a_{A\mid BC}}=\frac{8\sqrt{5}}{81}, C_{a_{AB }}=\frac{4\sqrt{10}}{81}, C_{a_{AC}}=\frac{4\sqrt{19}}{81}$. Thus, by Lemma \ref{l:chap4-lem1}, $\nu=\frac{C^{2}_{a_{A\mid BC}}-C^{2}_{a_{AB}}}{C^{2}_{a_{AC}}}=\frac{10}{19}$. Set $m=2$ (since $m\geq 1$).
%{\color{red} Can we also draw piecewise graph in this case?}

\bigskip
By Corollary \ref{c:chap4-cor1}, for any $\delta<\beta\leq 2\delta$, the RHS of the $\beta$th-polygamy relation is
\begin{align*}
T_1&=\nu^{\frac{\beta}{2}}C_{aAC}^{\beta}+\left((1+\nu)^{\frac{\beta}{2}}-\nu^{\frac{\beta}{2}}+\frac{\beta}{2}\left[(1+\nu)^{\frac{\beta}{2}-1}-\nu^{\frac{\beta}{2}-1}\right]\nu\left(\frac{C_{aAC}^{2}}{C_{aAB}^{2}}-1\right)\right)C_{aAB}^{\beta}\\
&=\left(\frac{10}{19}\right)^{\frac{\beta}{2}}\left(\frac{4\sqrt{19}}{81}\right)^{\beta}+\left(\left(\frac{29}{19}\right)^{\frac{\beta}{2}}-\left(\frac{10}{19}\right)^{\frac{\beta}{2}}+\frac{9}{38}\beta\left[\left(\frac{29}{19}\right)^{\frac{\beta}{2}-1}-\left(\frac{10}{19}\right)^{\frac{\beta}{2}-1}\right]\right)\left(\frac{4\sqrt{10}}{81}\right)^{\beta}
\end{align*}

The following upper bound from Theorem \ref{t:chap4-thm1} is
 \begin{align*}
T_2&=C_{aAC}^{\beta}+\left(2^{\frac{\beta}{2}}-1+\frac{\beta}{2}\left(2^{\frac{\beta}{2}-1}-1\right)\left(\frac{C_{aAC}^{2}}{C_{aAB}^{2}}-1\right)\right)C_{aAB}^{\beta}\\
&=\left(\frac{4\sqrt{19}}{81}\right)^{\beta}+\left(2^{\frac{\beta}{2}}-1+\frac{9}{20}\beta\left(2^{\frac{\beta}{2}-1}-1\right)\right)\left(\frac{4\sqrt{10}}{81}\right)^{\beta}
\end{align*}
\end{exmp}

\begin{figure}[H]
    \centering
    \includegraphics[width=0.8\textwidth]{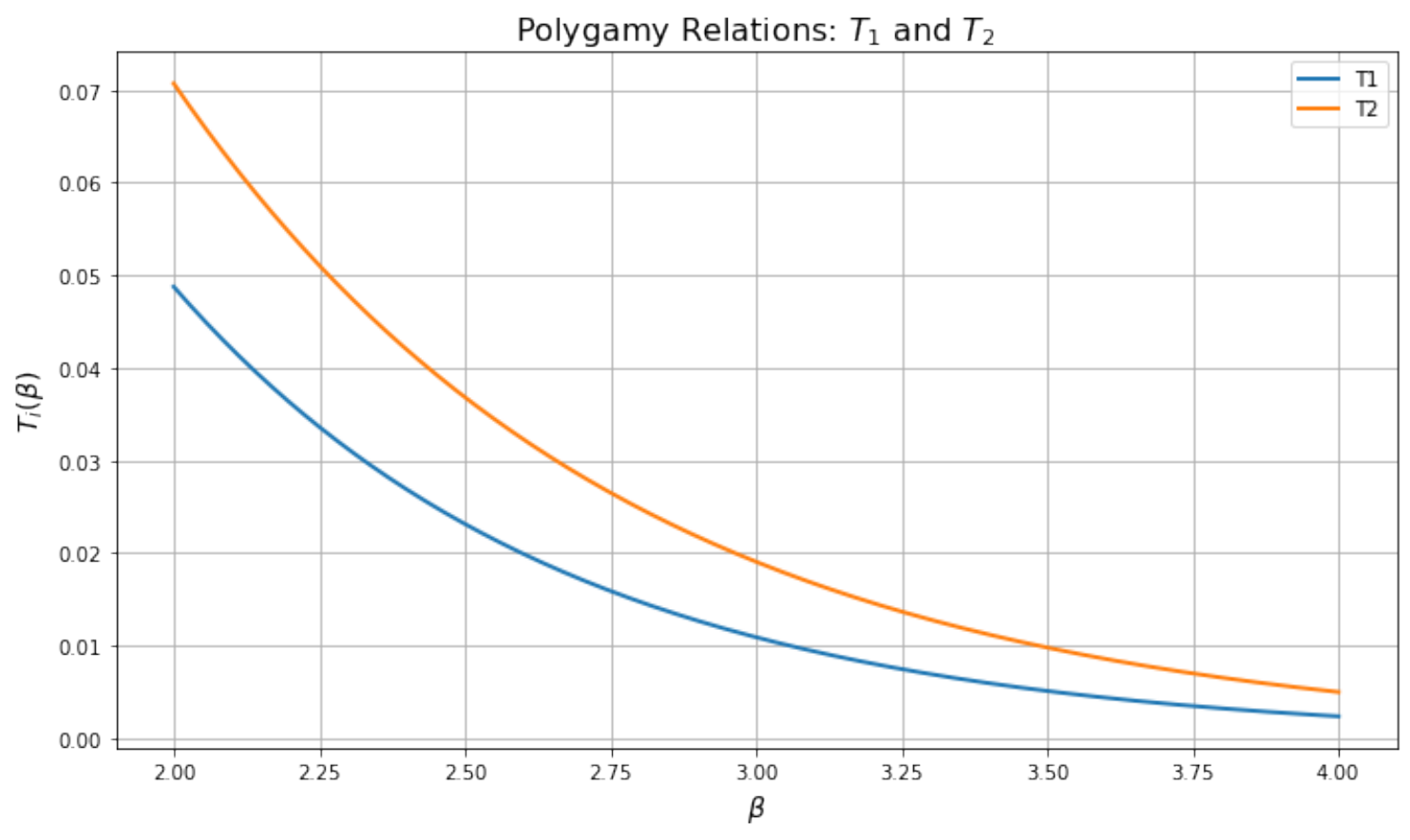}
    \caption{Comparisons between two upper bounds $T_1$ and $T_2$ associated with the $\beta$-th power polygamy relation of the concurrence of assistance, under the specific three-qubit quantum state $|\psi\rangle$ constructed in Example 4.6. The bound $T_1$ corresponds to the newly proposed inequality in Corollary 4.5, which incorporates the state-dependent refinement via the parameter $\nu = \frac{10}{19}$, while $T_2$ is derived from the classical bound in Theorem 4.1 that does not utilize this refinement. As can be seen, the curve representing $T_1$ consistently lies below that of $T_2$ for the entire domain $\beta \in [2,4]$, demonstrating that the new bound obtained through the parameterized inequality structure yields a strictly tighter constraint on the assisted entanglement distribution. %\textcolor{red}
    {It demonstrates that the new bounds obtained through a parameterized inequality structure impose tighter constraints on the distribution of assisted entanglement. This highlights the practical advantage of using state-adaptive parameters to capture the inherent asymmetry in the shareability of entanglement.}}
    \label{fig:T1T2}
\end{figure}

\section{\textbf{Appendix}}\label{s:appendix}
\subsection{\textbf{Proof of Lemma \ref{l:chap2-lem1}}}
 Consider the function:
 $$f(x,t)=(1+t)^{x}-t^{x}$$
defined on $(x,t)\in [m,+\infty)\times[k,+\infty).$
Then $\frac{\partial^{n} f(x,t)}{\partial t^{n}}=x^{\underline{n}}[(1+t)^{x-n}-t^{x-n}], (n=0,1,2,\cdots),$
where the falling factorial $x^{\underline{n}}=x(x-1)\cdots (x-n+1)$,$(n=1,2,\cdots)$, and $x^{\underline{0}}=1$.

{\bf(1)} For $x\geq m$, $\frac{\partial^{\lfloor m\rfloor} f(x,\xi_{m})}{\partial t^{\lfloor m\rfloor}}=x^{\underline{\lfloor m\rfloor}}\left[(1+\xi_{m})^{x-\lfloor m\rfloor}-\xi_{m}^{x-\lfloor m\rfloor}\right]\geq 0.$ Thus, for $ t\geq k>0$, the Taylor formula of $f(x, t)$ at the point $(x,k)$ with the Lagrange remainder term is
\begin{align*}
f(x,t)&=\sum_{n=0}^{\lfloor m-1\rfloor}\binom{x}{n}\left[(1+k)^{x-n}-k^{x-n}\right](t-k)^{n}+\binom{x}{\lfloor m\rfloor}\left[(1+\xi_{m})^{x-\lfloor m\rfloor}-\xi_{m}^{x-\lfloor m\rfloor}\right](t-k)^{\lfloor m\rfloor}\\
&\geq\sum_{n=0}^{\lfloor m-1\rfloor}\binom{x}{n}\left[(1+k)^{x-n}-k^{x-n}\right](t-k)^{n}
\end{align*}
where $\xi_{m}\in (k,t)$, $ (m\geq1, n=0,1,\ldots,\lfloor m-1\rfloor).$

{\bf(2)} For $m-1 <x\leq m$, we need to analyze this in two separate cases:

${\bf1^{\circ}}$ If $m-1 <x\leq \lfloor m\rfloor$, then $\frac{\partial^{\lfloor m\rfloor} f(x,\eta_{m})}{\partial t^{\lfloor m\rfloor}}=x^{\underline{\lfloor m\rfloor}}\left[(1+\eta_{m})^{x-\lfloor m\rfloor}-\eta_{m}^{x-\lfloor m\rfloor}\right]\leq 0.$ Thus, for $ t\geq k>0$, we have
\begin{align*}
f(x,t)&=\sum_{n=0}^{\lfloor m-1\rfloor}\binom{x}{n}\left[(1+k)^{x-n}-k^{x-n}\right](t-k)^{n}+\binom{x}{\lfloor m\rfloor}\left[(1+\eta_{m})^{x-\lfloor m\rfloor}-\eta_{m}^{x-\lfloor m\rfloor}\right](t-k)^{\lfloor m\rfloor}\\
&\leq\sum_{n=0}^{\lfloor m-1\rfloor}\binom{x}{n}\left[(1+k)^{x-n}-k^{x-n}\right](t-k)^{n}
\end{align*}
where $\eta_{m}\in (k,t)$.
%$ (m\geq1, n=0,1,\ldots,\lfloor m-1\rfloor).$

${\bf2^{\circ}}$ If $\lfloor m\rfloor <x\leq m$, then
$\frac{\partial^{\lceil m\rceil} f(x,\zeta_{m})}{\partial y^{\lceil m\rceil}}=x^{\underline{\lceil m\rceil}}\left[(1+\zeta_{m})^{x-\lceil m\rceil}-\zeta_{m}^{x-\lceil m\rceil}\right]\leq 0.$
Therefore, for $ t\geq k>0$
\begin{align*}
f(x,t)&=\sum_{n=0}^{\lceil m-1\rceil}\binom{x}{n}\left[(1+k)^{x-n}-k^{x-n}\right](t-k)^{n}+\binom{x}{\lceil m\rceil}\left[(1+\zeta_{m})^{x-\lceil m\rceil}-\zeta_{m}^{x-\lceil m\rceil}\right](t-k)^{\lceil m\rceil}\\
&\leq\sum_{n=0}^{\lceil m-1\rceil}\binom{x}{n}\left[(1+k)^{x-n}-k^{x-n}\right](t-k)^{n}
\end{align*}
where $\zeta_{m}\in (k,t)$.
And for any real number $a$, $\lfloor a \rfloor$ and $\lceil a \rceil$ denote the floor and ceiling functions, respectively.

\subsection{\textbf{Proof of Corollary \ref{c:chap2-cor1}}} For $t\geq k\geq1$, set $F(t)=\sum_{n=1}^{\lfloor m-1\rfloor}\binom{x}{n}[(1+k)^{x-n}-k^{x-n}](t-k)^{n}- k^{-x}+t^{-x}-(2^{m-1}-2)x(t-k).$ Then
\begin{align*}
F^{'}(t)&=\sum_{n=1}^{\lfloor m-1\rfloor}\binom{x}{n}[(1+k)^{x-n}-k^{x-n}]n(t-k)^{n-1}-xt^{-x-1}-(2^{m-1}-2)x \\ F^{''}(t)&=\sum_{n=2}^{\lfloor m-1\rfloor}\binom{x}{n}[(1+k)^{x-n}-k^{x-n}]n(n-1)(t-k)^{n-2}+x(x+1)t^{-x-2}\geq 0
\end{align*}
 So
\begin{align*}
F^{'}(t)\geq F^{'}(k)&=x[(1+k)^{x-1}-k^{x-1}-k^{-x-1}-(2^{m-1}-2)]\\
&\geq x[(1+k)^{x-1}-k^{x-1}-(2^{m-1}-1)]\quad\quad(\text{by $k\geq 1$ })\\
&\geq 0 \quad\quad(\text{by $h(x,k)\geq h(m,1)$ })
\end{align*}
where the function $h(x,t)=(1+t)^{x-1}-t^{x-1}$ is increasing in $x\geq m\geq2,t\geq k\geq 1$. This implies $F(t)$ is increasing function for $t\geq k\geq1$, so $F(t)\geq F(k)=0$, which yields \eqref{e:chap2-ineq4}.

\subsection{\textbf{Proof of Lemma \ref{l:chap2-lem2}}}

We use induction on $N$. The case of $N = 1$ is clear. Assume \eqref{e:chap2-ineq7} holds for $<N$. For given $p_i$ it is clear that $p_{1}+p_{2}+\cdots+p_{N-1}\geq (N-1)p_{N}$. Using Lemma \ref{l:chap2-lem1} we have that

\begin{align*}
&\left(\sum_{i=1}^{N}p_{i}\right)^{x}=\left(p_{1}+p_{2}+\cdots+p_{N}\right)^{x}=p_{N}^{x}\left(1+\frac{p_{1}+\cdots+p_{N-1}}{p_{N}}\right)^{x}\\
&\geq p_{N}^{x}\left(\left(\frac{p_{1}+\cdots+p_{N-1}}{p_{N}}\right)^{x}+\sum_{n=0}^{\lfloor m-1\rfloor}\binom{x}{n}[(N^{x-n}-(N-1)^{x-n})]\left(\tau_{N}-(N-1)\right)^{n}\right)\\
&=(p_{1}+\cdots+p_{N-1})^{x}+p_{N}^{x}\sum_{n=0}^{\lfloor m-1\rfloor}\binom{x}{n}[(N^{x-n}-(N-1)^{x-n})]\left(\tau_{N}-(N-1)\right)^{n}
\end{align*}
where $\tau_{N}=\frac{p_{1}+\cdots+p_{N-1}}{p_{N}}.$
By the inductive hypothesis, the above is no less than the right-hand side (RHS) of \eqref{e:chap2-ineq7}. The proof of \eqref{e:chap2-ineq8} is similar.

\section{\textbf{Conclusion}}

In this work, we have developed a unified and parameterized framework for refining monogamy and polygamy inequalities of bipartite entanglement measures in multipartite quantum systems. By dividing the region through an integer parameter $m \geq 1$, we have established:

\begin{itemize}
    \item A hierarchy of tighter $\alpha$-power monogamy inequalities for any bipartite entanglement measure $\mathcal{E}$, valid for $\alpha \geq m\gamma$;
    \item A family of refined $\beta$-power polygamy inequalities for any assisted entanglement measure $\mathcal{E}_a$, applicable for $(m-1)\delta < \beta \leq m\delta$.
\end{itemize}
%\textcolor{red}
{This framework is physically motivated by the inherently hierarchical and asymmetric nature of entanglement distribution. The parameter $m$ controls the depth of considered correlations, while state-specific parameters $\mu$ and $\nu$ quantify entanglement imbalance, allowing our bounds to adapt to the actual geometry of quantum states.}

%\textcolor{red}
{The resulting monogamy bound emerges as an optimal piecewise function of $\alpha$, with segments activated at thresholds $2m$. This structure incorporates higher-order corrections, delivering strictly sharper constraints than previous continuous formulations. Our results generalize and strengthen existing relations (e.g.,~\cite{CJMW1, GYG}), naturally recovering standard bounds when $m=1$. Analytical proofs and numerical tests using concurrence and concurrence of assistance confirm their enhanced tightness.}

%\textcolor{red}
{By further introducing state-dependent parameters $\mu$ and $\nu$, we also tighten the relations in~\cite{CJMW2}. This methodology advances the quantitative understanding of entanglement distribution and provides practical tools for quantum communication, networks, and multipartite information processing.}

\bigskip
\centerline{\bf Acknowledgments}
\medskip

The work is supported in part by 
%the National Natural Science Foundation of China (grant nos.12171303 and 12001218), 
the Simons Foundation (grant no. MP-TSM-00002518).

\bibliographystyle{plain}

\end{document}